  \providecommand\BibTeX{{
    \normalfont B\kern-0.5em{\scshape i\kern-0.25em b}\kern-0.8em\TeX}}}
\newcommand{\namedref}[2]{\hyperref[#2]{#1~\ref*{#2}}}
\newcommand{\equalityref}[1]{\hyperref[#1]{Equality~\eqref{#1}}}
\newcommand{\inequalityref}[1]{\hyperref[#1]{Inequality~\eqref{#1}}}
\newenvironment{proof sketch}[1]{\noindent {\emph{Proof sketch of #1:}}}{\hfill \qed}
\newtheorem{theorem}{Theorem}
\newtheorem{claim}[theorem]{Claim}
\newtheorem{lemma}[theorem]{Lemma}
\newtheorem{definition}[theorem]{Definition}
\newtheorem{invariant}[theorem]{Invariant}
\newcommand{\Set}[1]{\left\{#1\right\}}
\renewcommand{\deg}{\textit{deg}}
\newcommand{\indeg}{\textit{deg}_{\textit{in}}}
\newcommand{\outdeg}{\textit{deg}_{\textit{out}}}
\newcommand{\degdist}{\Delta}
\newcommand{\indegdist}{\degdist_{\textit{in}}}
\newcommand{\pr}{\mathbf{Pr}}
\def\LongVersion{}
\def\LongVersionEnd{}
\long\def\ShortVersion#1\ShortVersionEnd{}
\def\ShortVersion{}
\def\ShortVersionEnd{}
\long\def\LongVersion#1\LongVersionEnd{}
\newcommand{\poly}{\mbox{poly}}
\newcommand{\polylog}{\mbox{polylog}}
\newcommand{\parent}{\textit{parent}}
\newcommand{\potenf}{\text{$\Phi$}}
\newcommand{\nuc}{\texttt{next-child}}
\newcommand{\nuca}{\texttt{next-child-from}}
\newcommand{\nuct}{\texttt{next-child-flag}}
\newcommand{\up}{\texttt{parent}}
\newcommand{\naivenuc}{\texttt{na\"{\i}ve-\nuc}}
\newcommand{\childsccsr}{\texttt{successor}}
\newcommand{\childinsert}{\texttt{insert}}
\newcommand{\children}{\texttt{child}}
\newcommand{\toss}{\texttt{toss}}
\newcommand{\hextmin}{\texttt{heap-extract-min}}
\newcommand{\hinsert}{\texttt{heap-insert}}
\newcommand{\flyBA}{\texttt{O-t-F-BA}}
\newcommand{\optBA}{\texttt{Opt-BA}}
\newcommand{\nn}{\texttt{BA-next-neighbor}}
\newcommand{\fparent}{\texttt{BA-parent}}
\newcommand{\direct}{{\tt imm}}
\newcommand{\recursive}{{\tt rec}}
\newcommand{\constfalse}{{\tt false}}
\newcommand{\consttrue}{{\tt true}}
\newcommand{\whp}{w.h.p.}
\newcommand{\front}{\textit{front}}
\newcommand{\nil}{\mathsf{nil}}
\newcommand{\trank}{\textit{rank}}
\newcommand{\tselect}{\textit{select}}
\newcommand{\tinsert}{\textit{insert}}
\newcommand{\tdelete}{\textit{delete}}
\newcommand{\rhead}{\textit{head}}
\newlength{\mynodewidth}
\newlength{\mynodewidtht}
\tikzstyle{mynodestyle} = [draw, outer sep=2,inner sep=10,minimum size=20, text width=\mynodewidth,align=center,execute at begin node=\hskip0pt,rounded corners=3mm,top color=white,bottom color=black!20]
\tikzstyle{emptynodestyle} = [outer sep=2,inner sep=10,minimum size=20, text width=\mynodewidth]
\DeclareMathOperator*{\argmin}{arg\,min}
\begin{document}

\title{Sublinear Random Access Generators for Preferential Attachment Graphs
\thanks{A preliminary version of this work appeared in the Proceedings of ICALP 2017~\cite{EvenLMR17}.}}

 \author{Guy Even\thanks{
 Tel Aviv University Tel Aviv 6997801 Israel. Email: {\tt guy@eng.tau.ac.il}.}
 \and
Reut Levi\thanks{
The Interdisciplinary Center Herzliya (IDC). Email: {\tt reut.levi1@idc.ac.il}. This research was supported
by the Israel Science Foundation grant No. 1867/20.}
\and
Moti Medina\thanks{
 Faculty of Engineering, Bar-Ilan University, Ramat Gan, Israel. Email: {\tt moti.medina@biu.ac.il}. This research was supported by the Israel Science Foundation Grant No. 867/19.}
\and
Adi Ros\'{e}n\thanks{    
CNRS and Universit\'{e}  Paris Cit\'e, France. Email: {\tt adiro@irif.fr}. Research  supported in part by ANR project RDAM.   }
}
 
 \date{}

\maketitle

\begin{abstract}

We consider the problem of sampling from a distribution on graphs,  specifically
when the distribution is defined by an evolving graph model, and consider the
time, space and randomness complexities of such samplers.

 In the standard approach, the whole graph is chosen randomly according to
 the randomized evolving process,  stored in full, and then queries on the sampled graph are answered by simply accessing the stored graph. This may require prohibitive  amounts of
 time, space  and random bits, especially when only a small number of queries
are actually issued.
   Instead, we propose a setting where one  generates parts of the sampled  graph on-the-fly, in response to queries, and
   therefore   requires amounts of time, space, and random bits which are  a function of the actual number of
    queries.  Yet, the responses to the queries correspond to a graph sampled from the distribution in question.

  Within this framework we focus on two random graph models: the Barab{\'{a}}si-Albert Preferential
  Attachment model (BA-graphs) {\color{black}(Science, 286(5439):509--512)  (for the special case of out-degree $1$)} and the random recursive tree model {\color{black}(Theory of Probability and Mathematical Statistics, (51):1--28)}.
  We give on-the-fly generation algorithms for both models. With
  probability  $1-1/\poly(n)$, each and every query is answered in $\polylog(n)$ time, and the
  increase in space and the number of random bits consumed by any single query are both
   $\polylog(n)$, where $n$ denotes the number of vertices in the graph.

 Our work thus proposes a new approach for the access to huge graphs sampled from a given distribution,  and our  results show that, although the BA random graph model is defined by a sequential
  process, efficient  random access  to the graph's nodes is possible. In addition to the
  conceptual contribution, efficient on-the-fly generation of random graphs can serve as a
  tool for the efficient simulation of sublinear algorithms over large BA-graphs, and the efficient
  estimation of their performance on such graphs.

\end{abstract}

\section{Introduction}

Consider a Markov process in which a sequence $\{S_t\}_t$ of states, $S_t \in  \mathcal{S}$,  evolves over time $t \geq 1$.
Suppose there is a set $\mathcal{P}$ of predicates defined over the state space
$\mathcal{S}$. Namely, for every predicate $P\in\mathcal{P}$ and state $S\in\mathcal{S}$, the value
of $P(S)$ is well defined. A query is a pair $(P,t)$ and the answer to the query is
$P(S_t)$. In the general case, answering a query $(P,t)$ requires letting the Markov process
run for $t$ steps until $S_t$ is generated. In this paper we are interested in ways
to reduce the dependency, on $t$, of the computation time, the memory space, and the number
of used random bits, required to answer a query $(P,t)$. We propose an approach to achieve that in
the context of huge random graphs, samples according
to some given distribution.

We focus on the  case of generative models for random graphs, and in
particular, on the  Barab{\'{a}}si-Albert Preferential Attachment
model~\cite{BA99} with out-degree $1$ (which we call BA-graphs), on the equivalent linear evolving
copying model of Kumar et al.~\cite{KumarRRSTU00}, and on the random recursive tree model~\cite{survey_trees}.
The question we address is
whether one can design a randomized {\em on-the-fly}  graph generator
 that
answers  adjacency list queries of  BA-graphs (or random recursive trees), without having to generate the
complete graph.  Such a generator outputs answers to adjacency list queries as if it
first selected  the whole graph at random (according the appropriate distribution)
and then answered the queries based on the sampled graph.

We are interested in the following resources of a graph generator:
(1)~the number of
random bits consumed per query, (2)~the running time per query, and (3)~the increase
in memory space  per query.

Our main result is a randomized on-the-fly graph generator for BA-graphs over $n$ vertices that
answers adjacency list queries. The generated graph is sampled according to the distribution defined for BA-graphs  over $n$ vertices,
and the complexity upper bounds that we prove hold with probability
$1-1/\poly(n)$.  That is, with probability
$1-1/\poly(n)$  each and every query is answered in $\polylog(n)$ time, and the increase in space, and  the number
of  random bits consumed during that  query are $\polylog (n)$.  Our result refutes (definitely for
$\polylog(n)$ queries) the recent statement of Kolda et al.~\cite{KoldaPPS14} that:
``The majority of graph models add edges one at a time in a way that each random edge
influences the formation of future edges, making them inherently serial and therefore
unscalable.  The classic example is Preferential Attachment, but there are a variety
of related models...''

We remark that the entropy of the edges in BA-graphs is $\Theta(\log n)$ per edge in the
second half of the graph~\cite{sauerhoffentropy}. Hence it is not possible to consume
a sublogarithmic number of random bits per query in the worst case if one wants
to sample according to the  BA-graph distribution. Similarly, to
ensure consistency (i.e., answer the same query twice  in the same way) one must use $\Omega(\log n)$ space
per query.

From a conceptual point of view, the main ingredient of our result are techniques to ``invert'' the sequential
process where each new vertex randomly selects its ``parent'' in the graph among the previous vertices.
 Instead,  vertices randomly select
 their ``children'' among the ``future'' vertices, while maintaining the same
 probability distribution as if each child  picked
 ``in the future'' its parent. We apply these techniques in the related model of random recursive trees~\cite{survey_trees} (also used within
  the evolving copying model~\cite{KumarRRSTU00}), and use
 them as a building block for our main result for BA-graphs. 
{\color{black} We next define the various random graph models we refer to.}
\ShortVersion
 Due to space limitations, some of the proofs are omitted from this extended abstract.
\ShortVersionEnd

\subsection{Random Graph Models}\label{section.models}
Let $V_n \triangleq \{v_1,\ldots,v_n\}$.  Let $G=(V_n,E)$ denote a directed
graph on $n$ nodes.\footnote{Preferential attachment graphs are usually presented as undirected
  graphs. For convenience of discussion we orient each edge from the high index vertex to the low index vertex,
  but the graphs we consider remain undirected graphs.}
  We refer to the endpoints of a directed edge $(u,v)$ as the \emph{head} $v$ and the
\emph{tail} $u$.
Let $\deg(v_i,G)$ denote the \emph{degree} of the vertex $v_i$ in $G$ (both incoming
and outgoing edges). Similarly, let $\indeg(v_i,G)$ and $\outdeg(v_i,G)$ denote the
in-degree and out-degree, respectively, of the vertex $v_i$ in $G$.

\paragraph*{Preferential attachment~\cite{BA99}.}
We restrict our attention to the case  in which each vertex is connected to the previous
vertices by a single edge (i.e., $m=1$ in the terminology of~\cite{BA99}).
\LongVersion
\footnote{As mentioned-above,
while the process generates an undirected graph, for ease of discussion we consider
 each edge as directed from its higher-numbered adjacent node to its lower-numbered adjacent node.}
\LongVersionEnd
We thus denote the random process that generates a graph over $V_n$ according to
the preferential attachment model by $BA_n$.  The random process $BA_n$ generates a
sequence of $n$ directed edges $E_n\triangleq\{e_1,\ldots,e_n\}$, where the tail of
$e_i$ is $v_i$, for every $i\in \{1, \ldots , n\}$.  (We abuse notation and let $BA_n=(V_n,E_n)$
also denote the graph generated by the random process.)
We refer to the head of $e_i$ as the \emph{parent} of $v_i$.

The process $BA_n$ draws the edges sequentially starting with the self-loop $e_1
=(v_1,v_1)$.  Suppose we have selected $BA_{j-1}$, namely, we have drawn the edges
$e_1,\ldots, e_{j-1}$, for $j>1$.  The edge $e_j$ is drawn
such its head is node $v_i$ with probability  $\frac{\deg(v_i,BA_{j-1})}{2(j-1)}$.

Note that the out-degree of every vertex in (the directed graph representation of) $BA_n$ is exactly one, with only one
self-loop in $v_1$. Hence $BA_n$ (without the self-loop) is an in-tree rooted at $v_1$.

\paragraph{Evolving copying model~\cite{KumarRRSTU00}.}

Let $Z_n$ denote the evolving copying model with out-degree $d=1$ and copy factor
$\alpha=1/2$. As in the case of $BA_n$, the process $Z_n$ selects the edges
$E'_n=\{e'_1,\ldots,e'_n\}$ one-by-one starting with a self-loop $e'_1=(v_1,v_1)$. Given
the graph $Z_{n-1}=(V_n,E'_n)$, the next edge $e'_n$ emanates from $v_n$.  The head
of edge $e'_n$ is chosen as follows.
Let $b_n\in \Set{0,1}$ be an unbiased random bit.  Let $u(n)\in
\{1, \ldots , n-1\}$ be a uniformly distributed random variable
(the random variables $b_1,\ldots, b_n$ and $u(1),\ldots, u(n)$ are fully
independent.)  The head $v_i$ of $e'_n$ is determined as follows:
\ShortVersion
$\rhead(e'_n) \triangleq u(n)$, if $b_n=1$; and $\rhead(e'_n) \triangleq \rhead(e_{u(n)})$, if $b_n=0$.
\ShortVersionEnd
\LongVersion
\begin{align}\label{lb1}
\rhead(e'_n) &\triangleq
\begin{cases}
  u(n) & \text{if $b_n=1$}\\
\rhead(e'_{u(n)}) & \text{if $b_n=0$}
\end{cases}~.
\end{align}
\LongVersionEnd

\paragraph{Random recursive tree model~\cite{survey_trees}.}
If we eliminate from the evolving copying model the bits $b_i$ and  define  $\rhead(e'_n) \triangleq  u(n)$, we get a model
where each new node $n$ is connected to one of the previous nodes,  chosen uniformly
 at random. This is the extensively studied (random) recursive tree model~\cite{survey_trees}.

\LongVersion
\bigskip
\LongVersionEnd

{\color{black}
\subsection{Results}

Our main results are stated in the following theorems (which are informal versions of Theorem~\ref{th:comp_one_call_nn} and Lemma~\ref{le:complexity_pointers_tree}, respectively).

\begin{theorem}(Informal)
There exists an algorithm that provides query access to the adjacency-lists of a graph $G$ over $n$ nodes, where $n$ is a parameter and $G$ is drawn according to the random process $BA_n$.
With high probability, the complexities of executing each query are as follows.
\begin{enumerate}
\item The increase, during that query, of the space used  by our algorithm is $O(\log^3 n)$.
\item The number of  random bits used during that query is $O(\log^5 n)$.
\item The time complexity of that query is $O(\log^6 n)$.
\end{enumerate}
\end{theorem}

\begin{theorem}(Informal)
There exists an algorithm that provides query access to the adjacency-lists of a graph $G$ over $n$ nodes, where $n$ is a parameter and $G$ is drawn according to the random process $Z_n$.
With high probability, the complexities of executing each query are as follows.
\begin{enumerate}
\item The increase, during that query, of the space used  by our algorithm is $O(\log^2 n)$.
\item The number of  random bits used during that query is $O(\log^4 n)$.
\item The time complexity of that query is $O(\log^5 n)$.
\end{enumerate}
 \end{theorem}

}

\LongVersion
\subsection{Related work}
\LongVersionEnd
\ShortVersion
\vspace{-0.4cm}
\subparagraph*{Related work.}
\ShortVersionEnd
A linear time randomized algorithm for efficiently generating BA-graphs is given in
Batagelj and Brandes~\cite{batagelj2005efficient}. See also Kumar et
al.~\cite{KumarRRSTU00} and Nobari et al.~\cite{nobari2011fast}.
A parallel algorithm
is given in Alam et al.~\cite{AlamKM13}. See also Yoo and
Henderson~\cite{yoo2010parallel}. An external memory algorithm was presented by Meyer
and Peneschuck~\cite{extmem2016}.
{\color{black}

Efficient generation of other graph models was also studied.
Miller and Hagberg~\cite{miller2011efficient} introduced a randomized algorithm that generates a graph with a given sequence of
expected degrees (also called the Chung and Lu model) with expected running time of $O(n+m)$, where $n$ is the number of vertices
and $m$ is the number of edges of the generated graph.
Additional efficient random graph generation algorithms for other graph
models (e.g., Kronecker and the Stochastic block model) are provided in
 Ramani, Eikmeier, and Gleich~\cite{ramani2019coin}.}
\ShortVersion
Generating huge random objects while using ``small'' amounts of randomness was studied by
Goldreich, Goldwasser and Nussboim~\cite{GGN10}. Mansour et al.~\cite{MansourRVX12}
consider local generation of bipartite graphs in the context of  local simulation of Balls into Bins online algorithms.
\ShortVersionEnd
\LongVersion

Goldreich, Goldwasser and Nussboim initiate the study of the generation of huge random objects~\cite{GGN10}
while using a ``small''  amount of randomness. They  provide an efficient {\color{black} stateless} query access to an object modeled as a function, when
the object has a predetermined  property, for example graphs which are connected.  They guarantee that these
objects are indistinguishable from random objects that have the  same property.
This refers to the setting where the size of the object is exponential in the number of queries to the function modeling
the object.
{\color{black}In a followup paper by Bogdanov and Wee~\cite{bogdanov2004stateful} stateful implementations of huge random objects were considered. They showed how to
generate in an ``on the fly'' fashion a random  Boolean function that supports XOR queries over sub-cubes of the function's domain hypercube.}
We note that our {\color{black} stateful} generator provides access to graphs which are random BA-graphs and not just indistinguishable from random BA-graphs.

Mansour, Rubinstein, Vardi and Xie~\cite{MansourRVX12} consider local generation of bipartite graphs
for local simulation of Balls into Bins online algorithms.
They assume that the balls arrive one by one and that each ball picks $d$ bins independently, and is then assigned to
one of them. The local simulation of the  algorithm locally generates a bipartite graph.
Mansour et al. show that with high probability one needs to inspect only a small portion of the the bipartite
graph in order to run the simulation and hence a random seed of logarithmic size
is sufficient.

Our work has inspired subsequent work in the setting that we propose here, i.e., on-the-fly local generation of
graphs according to a given distribution. In fact, subsequent to the initial publication of the present
work~\cite{EvenLMR17}, Biswas, Rubinfeld, and Yodpinyanee~\cite{BisRY17}
 have devised local graph generators for, most notably, the Erd\"os-R\'enyi model (with
 next-neighbor, and other,  queries).

\LongVersionEnd
\LongVersion
\subsection{Applications}
\LongVersionEnd
\ShortVersion
\vspace{-0.4cm}
\subparagraph*{Applications.}
\ShortVersionEnd

One reason for generating large BA-graphs is to simulate algorithms over them, or to experimentally estimate
some of their properties (cf.~\cite{DraKMM16}).  Such
algorithms often access only small portions of the graphs. In such instances, it is
wasteful to generate the whole graph. An interesting example is sublinear
approximation algorithms~\cite{dana2012,yoshida2009improved,onak2008,onakthesis}
which probe a constant number of neighbors.
\LongVersion
\footnote{Strictly speaking, sublinear
  approximation algorithms apply to constant degree graphs and BA-graphs are not
  constant degree.  However, thanks to the power-law distribution of BA-graphs, one
  can ``omit'' high degree vertices and maintain the approximation. See
  also~\cite{ReingoldV14}.}
\LongVersionEnd
In addition, local computation algorithms probe a small
number of neighbors to provide answers to optimization problems such as maximal
independent sets and approximate maximum matchings~\cite{EvenMR14, EvenMResa14,
  ReingoldV14, RubinfeldTVX11, AlonRVX12, MansourRVX12, MansourV13, LeviMRRS15,
  LeviRR14, LeviRY15}. Support of adjacency list queries is especially useful for
simulating (partial) DFS and BFS over graphs.

\LongVersion
\subsection{Techniques}
 The main difficulty in providing the on-the-fly generator is in ``inverting'' the random choices of the BA process. That is,
 we need to be able to randomly choose the next ``child'' of a given node $x$, although it  will only ``arrive in the future'' and its
 choice of a parent in the BA-graph will depend on what will have happened until it arrives (i.e., on the node degrees in the BA-graph when
 that node arrives). One possibility to do so is to maintain, for any future node which does not yet have a parent,  how many potential
 parents it still has,  and then go sequentially over the future nodes and randomly decide if its parent will indeed be $x$. This is too
 costly  because  (1) we will need to go sequentially over the nodes, and (2)  it may be too costly in computation time to
 calculate what is the probability that the parent of a node $y$ that does  not have yet a parent, will be node $x$
 (given the random choices already done in response to previous
 queries).

To overcome this difficulty we define  for any node, even if it has already a parent, its probability to be a {\em candidate} to be a child of $x$.
We show how these probabilities can be calculated efficiently given the previous choices taken in response to previous queries, and show how, based on these probabilities,  we can define an efficient
process to chose the next {\em candidate}. The candidate  node may however
already have a parent, and thus cannot be a child of $x$. If this is the case we repeat the process  and  choose another candidate,
until we chose an eligible candidate which then is chosen to be the actual next child of $x$. We show that with high probability this process terminates
quickly and finds an eligible candidate, so that with high probability we have an efficient process to find ``into the future'' the next child of
 $x$. This is done while sampling exactly according to the distribution defined by the BA-graphs process.

In addition to the above technique, which is arguably the crux of our result, we use a number of data structures, based on known
constructions, to be able to run the on-the-fly generator with polylogarithmic time and space complexities.
  In the sequel we give, in addition to the formal definitions of the algorithms, some supplementary intuitive explanations into our
 techniques.
\LongVersionEnd

\section{Preliminaries}\label{sec:preliminaries}

\LongVersion
 The
\emph{normalized degree distribution} of $G$ is a vector $\degdist(G)$ with $n$
coordinates, one for each vertex in $G$.  The coordinate corresponding to $v_i$ is
defined by
\begin{align*}
  \degdist(G)_i & \triangleq  \frac{\deg(v_i,G)}{2\cdot |E|}~.
\end{align*}
Note that $\sum_{i=1}^n \degdist(G)_i=1$.

We also define the in-degree distribution $\indegdist(G)$ by
\begin{align*}
  \indegdist(G)_i & \triangleq  \frac{\indeg(v_i,G)}{|E|}~.
\end{align*}
\LongVersionEnd

In the sequel, when we say that an event occurs {\em with high probability} (or {\em w.h.p}) we mean that
it occurs with probability at least $1-\frac{1}{n^c}$, for some constant $c>0$.
{\color{black} We state the complexities of our algorithm (and our subroutines) with the guarantee of high probability.
Therefore, there is some negligible probability that our algorithm will require more resources~\footnote{\color{black}
We note that in any case the resources consumed by the algorithm can be bounded by the resources consumed by the sequential standard algorithm since one can easily abort and implement a sequential algorithm in the unlikely event that the on-the fly generator consumes too much resources.}.
}

\ShortVersion
For ease of presentation,  we  use in the algorithms arrays of size $n$. However,
in order to  keep the space complexity low,  we implement these
arrays by means of balanced search trees, with  keys in $\{1, \ldots , n\}$.  Thus, the space used by the
 ``arrays'' is  the number of keys stored. The time complexities that we give are therefore
 to be multiplied by a factor of $O(\log n)$.
\ShortVersionEnd

\LongVersion
For ease of presentation,  define the algorithm making use of {\em arrays} of size $n$. However,
in order to give the desired upper bounds on the space complexity, we implement these
arrays by means of balanced search trees, where the keys are in $\{1, \ldots , n\}$. To access
item $i$ in the virtual array,   key $i$ is searched in the tree and the value in that node is returned;
if the key is not found, then $\nil$ is returned. Thus, the space used by the
 virtual arrays is  the number of keys stored, and the time complexity of our
 algorithms is multiplied by a factor of $O(\log n)$ compared to the time complexity
 that it would have
  with a standard random-access implementation of the arrays. When we state upper bounds
   on time, we take into account these $O(\log n)$ factors.
  As common, we analyze the space complexity in terms of words of size $O(\log n)$.
\LongVersionEnd

\LongVersion
\section{Queries}
\LongVersionEnd
\ShortVersion
\section{Queries and On-the-Fly Generators}
\ShortVersionEnd
\label{se:queries}
\sloppy
Consider an undirected  graph $G=(V_n,E)$, where $V_n=\{v_1,\ldots,v_n\}$.  Slightly  abusing notation,
we sometimes consider and denote node $v_i$ as the integer number $i$ and so we have a natural order
 on the nodes. The access to the graph
 is done by means of a user-query $\nn: \{1, \ldots , n\} \rightarrow \{1, \ldots , n+1\}$, {\color{black} which outputs entries of the adjacency list in increasing order} (where $n+1$ denotes
 ``no additional neighbor'').
  {\color{black}
 For example, consider the node $3$ (namely the node that arrived third) whose parent is $1$ and its children are $7$, $10$ and $50$.
When the user executes the query $\nn(3)$ for the first time, $1$ is returned. The next time the query $\nn(3)$ is executed $7$ is returned, and then $10$ and $50$. After that, $\nn(3)$ always returns $n+1$ since $3$ does not have any additional neighbors.
 More formally}, we number the queries according to the order they are issued, and call
 this number the {\em time} of the query. Let $q(t)$ be the node on which the query at
 time $t$ was issued, i.e, at time $t$ the query  $\nn(q(t))$ is issued by the user.
 For each node $j\in V$ and any time $t$, let $last_t(j)$ be the largest numbered node
which was previously returned as the value of $\nn(j)$, or $0$ if no such
query was issued before time $t$. That is,
\ShortVersion
$ last_t(v) = \min\{0, \min_{t'<t}\{\nn(q({t'})) |  q({t'})=v\}$.
\ShortVersionEnd
\LongVersion
$$ last_t(j) = \max\{0, \max_{t'<t}\{\nn(q({t'})) |  q({t'})=j\}\}~.$$
\LongVersionEnd
At time $t$ the query $\nn(j)$ returns $\argmin_ {i> last_t(j)} \{(i,j) \in E\}$, or
$n+1$ if no such $i$ exists.
When the implementation of the query has access to a data structure holding the whole
of $E$, then the implementation
of \nn\ is straightforward just by accessing this data structure.
\LongVersion
Figure~\ref{fig:batch_query} illustrates
a ``traditional'' randomized graph generation algorithm that generates the whole graph, stores it, and then can
 answers queries by accessing the data structure that encodes the whole generated graph.
\LongVersionEnd

\LongVersion
\section{On-the-fly Graph Generators}
\LongVersionEnd
\sloppy
An on-the-fly graph generator is an algorithm that gives access to a graph by means
of the \nn\ query defined above, but itself does not have access to
a data structure that encodes the whole graph. Instead, in response to the queries
issued by the user, the generator modifies its internal data structure (a.k.a state),
which is initially some empty (constant) state.
The generator must ensure however that its answers are consistent with some graph $G$.
An on-the-fly graph generator for a given distribution on a family of graphs (such as the family of
Preferential Attachment graphs on $n$ nodes) must in addition ensure that it
samples the graphs according to the required distribution. That is,
 its answers to a sequence of queries must be  {\em  distributed identically}
 to those returned when a graph was first sampled (according to the desired distribution), stored,
  and then accessed (See Definition~\ref{def:answers}  and Theorem~\ref{th:equal_distribution}).
\LongVersion
 Figure~\ref{fig:otf_query} illustrates
an on-the-fly graph generation algorithm as the one we build in the present paper.
\LongVersionEnd

\LongVersion
\begin{figure}
  \centering
\begin{tikzpicture}[->,>=stealth,shorten >=1pt,auto,node distance=3cm,on grid,semithick]
\node[mynodestyle] (user) {user};
\node[mynodestyle] (mem) [below =of user] {memory (graph)};
\node[mynodestyle] (gen) [left=of mem] {batch Graph Generator};
\node[mynodestyle] (rnd) [left=of gen] {random bits};

\path 
	(user)  edge [bend left] node [right] {query}   (mem)
	(mem) edge [bend left] node [left]    {answer} (user)
	(rnd) edge 					 node [left]    {}           (gen)
	(gen) edge 					 node [left]    {}           (mem)
;

\end{tikzpicture}
  \caption{A ``traditional'' sequential random graph generator}
  \label{fig:batch_query}
\end{figure}
\LongVersionEnd

\LongVersion
\begin{figure}
  \centering
\begin{tikzpicture}[->,>=stealth,shorten >=1pt,auto,node distance=3cm,on grid,semithick]
\node[mynodestyle] (user) {user};
\node[mynodestyle] (gen) [below=of user] {on-the-fly Graph Generator};
\node[mynodestyle] (rnd) [below left=of gen] {random bits};
\node[mynodestyle] (mem) [below right=of gen] {memory (state)};

\path 
	(user)  edge [bend left] node [right] {query}   (gen)
	(gen) edge [bend left] node [left]    {answer} (user)
	(rnd) edge 					 node [left]    {}           (gen)
	(gen) edge [bend left] node [left]     {}           (mem)
	(mem) edge [bend left] node [right] {}       (gen)
;

\end{tikzpicture}
  \caption{An on-the-fly random graph generator}
  \label{fig:otf_query}
\end{figure}
\LongVersionEnd

We now relate the various models {\color{black} defined in Section~\ref{section.models}. The following relation is crucial for our implementation of the on-the-fly generator for BA-graphs. }
\ShortVersion
Proof omitted from this extended abstract.
\ShortVersionEnd
\begin{claim}[\cite{AlamKM13}]
  The random graphs $BA_n$ and $Z_n$ are identically distributed.
\end{claim}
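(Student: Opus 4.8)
The plan is to prove the statement by induction on the number of vertices $j$, coupling the two processes through their common edge-by-edge structure. Both $BA_j$ and $Z_j$ draw the edges $e_1,\dots,e_j$ (resp.\ $e'_1,\dots,e'_j$) one at a time, the $j$-th edge always having tail $v_j$; hence a graph in either model is completely determined by the sequence of heads chosen so far. It therefore suffices to show that, conditioned on the first $j-1$ edges having produced one and the same graph $G$, the head of the $j$-th edge has the same conditional distribution in both models. First I would dispose of the base case: for $j=1$ both processes deterministically output the self-loop $e_1=(v_1,v_1)$, so $BA_1$ and $Z_1$ are trivially equal. This reduces the whole claim to matching the one-step conditional laws.

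For the inductive step I fix $G=BA_{j-1}=Z_{j-1}$ and compute the law of $\rhead(e'_j)$ in $Z_j$ by conditioning on the fair bit $b_j$. With probability $1/2$ (the case $b_j=1$) the head is a uniformly random element of $[1,j-1]$, contributing $\frac12\cdot\frac{1}{j-1}$ to the probability of landing on $v_i$. With probability $1/2$ (the case $b_j=0$) the head copies $\rhead(e'_{u(j)})$ for a uniformly chosen previous edge $e'_{u(j)}$; since the number of edges among $e'_1,\dots,e'_{j-1}$ whose head is $v_i$ is exactly $\indeg(v_i,G)$, this contributes $\frac12\cdot\frac{\indeg(v_i,G)}{j-1}$. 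Here I would make explicit that the fresh randomness $(b_j,u(j))$ is independent of the edges drawn so far, so that these conditional probabilities are valid. Summing the two cases,
$$\pr[\rhead(e'_j)=v_i \mid Z_{j-1}=G]=\frac{1}{2(j-1)}\bigl(1+\indeg(v_i,G)\bigr).$$

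The final step is to recognize the right-hand side as the $BA$ transition probability $\frac{\deg(v_i,G)}{2(j-1)}$. The key structural observation is that in either model every vertex has out-degree exactly one: $v_1$ emits the self-loop and each $v_i$ with $i\ge 2$ emits the single edge to its parent. Hence $\outdeg(v_i,G)=1$ for all $i\in[1,j-1]$, so that $1+\indeg(v_i,G)=\outdeg(v_i,G)+\indeg(v_i,G)=\deg(v_i,G)$, and the displayed probability equals $\frac{\deg(v_i,G)}{2(j-1)}$, which is exactly the probability with which $BA_j$ attaches $v_j$ to $v_i$. The two one-step laws therefore coincide, and the induction closes.

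I expect the main obstacle to be bookkeeping around the self-loop and the degree normalization rather than any deep difficulty: one must check that the self-loop contributes $2$ to the total degree (equivalently $1$ to both $\indeg(v_1,\cdot)$ and $\outdeg(v_1,\cdot)$), which is precisely what makes $\sum_i \deg(v_i,G)=2(j-1)$ and reconciles the $BA$ normalization $2(j-1)$ with the $Z$ normalization over $j-1$ candidate edges. The one other point requiring care is the independence of the per-step randomness $(b_j,u(j))$ from the first $j-1$ edges, which is what legitimizes analyzing each conditional step in isolation; this is the only place where the independence assumptions on the $b$'s and $u$'s are used.
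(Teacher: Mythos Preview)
Your proof is correct and follows essentially the same approach as the paper's: induction on the number of vertices, with the inductive step reducing to the identity $\deg(v_i,G)=1+\indeg(v_i,G)$ (since every vertex has out-degree one) so that the $BA$ attachment probability $\frac{\deg(v_i,G)}{2(j-1)}$ splits as $\frac{1}{2}\cdot\frac{1}{j-1}+\frac{1}{2}\cdot\frac{\indeg(v_i,G)}{j-1}$, matching the two cases $b_j=1$ and $b_j=0$ of $Z_j$. The only cosmetic difference is that the paper starts from the $BA$ side and decomposes, whereas you start from the $Z$ side and assemble; the content is identical.
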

\LongVersion
\begin{proof}
  The proof is by induction on $n$. The basis ($n=1$)  is trivial. To prove the induction step,
  assume that $BA_{n-1}$ and $Z_{n-1}$ are identically
  distributed. We need to prove that the next edges $e_n$ and
  $e'_n$ in the two processes are also identically distributed,
  given a graph $G$ as the realization of $BA_{n-1}$ and $Z_{n-1}$, respectively.

  The head of $e_n$ is chosen according to the degree distribution
  $\degdist(BA_{n-1})=\degdist(G) $. Since the out-degree of every vertex is one,
  \begin{align*}
    \frac{\deg(v_i,BA_{n-1})}{2(n-1)} & =
    \frac 12 \cdot \left(
\frac{1}{n-1} + \frac{\indeg(v_i,BA_{n-1})}{n-1}
\right) .
  \end{align*}
  Thus, an equivalent way of choosing the head of $e_n$ is as follows: (1)~ with
  probability $1/2$, choose a random vertex uniformly (this corresponds to the $\frac
  12 \cdot \frac 1{n-1}$ term), and (2)~with probability $1/2$ toss a
  $\indegdist(BA_{n-1})$-dice (this corresponds to the $\frac 12 \cdot
  \frac{\indeg(v_i,BA_{n-1})}{n-1}$ term).

Hence, case (1) above  corresponds to the case when $b_n=1$, in the
process of $Z_n$.
 To complete the proof, we observe that, conditioned on the event that
  $b_n=0$, the choice of the head of $e'_n$ in $Z_n$ can be defined as
choosing according to the in-degree distribution of the nodes in $Z_{n-1}=G$:
 indeed, choosing according
  to the  in-degree distribution $\indegdist(G)$ is identical to choosing a uniformly
  distributed random edge in $G$ and then taking its head. But, since  the out-degrees of
 all the vertices in $V_{n-1}$  are all the same (and equal one), this is equivalent to choosing
 a uniformly distributed random node in $V_{n-1}$.
  \end{proof}
\LongVersionEnd

We use the following claim  in the sequel.
\begin{claim}[cf.~\cite{goh2002limit}, Thm. 1 and Thm. 6.32~\cite{Drmota2009}]
\label{cl:low_degree_and_height}
Let $T$ be a rooted directed tree on $n$ nodes denoted $1,\ldots,n$, and where node $1$ is the root of the tree.  If the head of the edge emanating
 from node $j >1$ is uniformly
distributed among the nodes in $[1,j-1]$, then, with high probability, the following two properties hold:
\ShortVersion
\begin{inparaenum}[(1)]
\ShortVersionEnd
\LongVersion
\begin{enumerate}
\LongVersionEnd
\item The maximum in-degree of a node in the tree is $O(\log n)$.
\item The height of the tree is $O(\log n)$.
\LongVersion
\end{enumerate}
\LongVersionEnd
\ShortVersion
\end{inparaenum}
\ShortVersionEnd
\end{claim}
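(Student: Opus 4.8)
The plan is to establish the two properties separately, each by a first-moment (union-bound) argument. The argument works cleanly because in this model the parent choices of distinct nodes are mutually independent, so both the in-degree of a fixed node and the probability of a fixed root-to-node path factorize into products of independent events.

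For the maximum in-degree, I would first observe that for a fixed node $i$ the in-degree is a sum of independent indicators. Writing $X_{ij}$ for the indicator that node $j>i$ selects $i$ as its parent, we have $\indeg(v_i,T)=\sum_{j=i+1}^{n} X_{ij}$ with $\Pr[X_{ij}=1]=1/(j-1)$, and the $X_{ij}$ are independent across $j$ since each node draws its parent independently. Hence $\mathbb{E}[\indeg(v_i,T)]=\sum_{j=i+1}^{n}1/(j-1)\le H_{n-1}\le \ln n+1$, where $H_{n-1}=\sum_{j=1}^{n-1}1/j$. A Chernoff bound then gives $\Pr[\indeg(v_i,T)\ge c\ln n]\le n^{-3}$ for a suitable constant $c$, and a union bound over the $n$ nodes yields a maximum in-degree of $O(\log n)$ w.h.p. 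This part is routine.

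For the height, the key structural fact is that $T$ is an in-tree in which every node has a unique path to the root $1$, and indices strictly decrease along that path. Thus the event $\{\text{height}(T)\ge d\}$ holds if and only if some node lies at depth exactly $d$, so by the first moment method $\Pr[\text{height}(T)\ge d]\le \mathbb{E}[N_d]$, where $N_d$ counts the nodes at depth $d$. I would then expand $\mathbb{E}[N_d]$ as a sum over increasing sequences $1=i_0<i_1<\cdots<i_d\le n$, each contributing $\prod_{k=1}^{d}\Pr[\parent(i_k)=i_{k-1}]=\prod_{k=1}^{d}\frac{1}{i_k-1}$, again using independence. The crux is to bound
\[
\sum_{1<i_1<\cdots<i_d\le n}\ \prod_{k=1}^{d}\frac{1}{i_k-1},
\]
which is exactly the elementary symmetric polynomial $e_d$ of the values $\{1/(j-1)\}_{j=2}^{n}$. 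Applying the inequality $e_d\le (e_1)^d/d!$ with $e_1=H_{n-1}\le \ln n+1$, together with $d!\ge (d/e)^d$, gives $\Pr[\text{height}(T)\ge d]\le \bigl(e(\ln n+1)/d\bigr)^d$. Choosing $d=\lceil c\ln n\rceil$ for a large enough constant $c>e$ makes the base strictly below $1$ and drives the bound below $1/n^{c'}$ for any desired polynomial, which proves the height bound.

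I expect the height analysis to be the main obstacle. The in-degree bound is a clean Chernoff estimate, whereas for the height the depths of different nodes are strongly correlated; the argument sidesteps this correlation by bounding only the \emph{expected} number of depth-$d$ nodes, a sum of path probabilities that factorizes by independence, rather than trying to control the full depth distribution. The delicate points are invoking the symmetric-function inequality correctly and fixing the constant in $d=\Theta(\log n)$, but for the qualitative $O(\log n)$ statement these are standard and the two union bounds combine to give both properties simultaneously w.h.p.
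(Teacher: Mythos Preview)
Your proof is correct. Both parts are clean applications of independence: the in-degree bound is a standard Chernoff-plus-union argument, and the height bound via the first-moment method with the elementary symmetric inequality $e_d\le e_1^d/d!$ is a well-known and valid route to the $O(\log n)$ depth of a random recursive tree.

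However, note that the paper does not give its own proof of this claim at all: it is stated as a citation to Drmota's book (Theorems~6.12 and~6.32) and used as a black box. So there is no ``paper's proof'' to compare against; you have supplied a self-contained argument where the authors simply invoked the literature. Your approach is the natural elementary one and is entirely adequate for the $O(\log n)$ bounds needed here (the cited reference gives sharper results, e.g.\ the height is asymptotically $e\ln n$, but that precision is not required in this paper).
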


\LongVersion
Note that the claim still holds if we add to the tree a self loop on node $1$.
\LongVersionEnd

\section{The Pointers Tree}\label{sec:pointers}

We now consider a graph inspired by the  random recursive tree model~\cite{survey_trees} and the evolving copying
 model~\cite{KumarRRSTU00}.
 Each vertex $i$ has a variable $u(i)$ that is
uniformly distributed over $[1,i-1]$, and can be viewed as a
directed edge (or pointer) from $i$ to $u(i)$.
We denote this random rooted directed in-tree {\color{black} (namely, a tree such that all its edges point towards the root)} by $UT$.
 Let $u^{-1}(j)$ denote
the set $\{i : u(i)=j\}$. We refer to the set $u^{-1}(i)$ as the \emph{u-children} of
$i$ and to $u(i)$ as the \textit{u-parent} of $i$.

In conjunction with each pointer, we keep a flag indicating
whether this pointer is to be used as an {\em \direct} (immediate) pointer, that is, whether it points to the \rhead, or as a {\em \recursive} (recursive) pointer (as defined in Equation~\ref{lb1}). We thus
 use  the directed pointer tree to represent a graph in the evolving copying model (which is equivalent,
when the flag of each pointer is equally distributed between \recursive\ and \direct,  to the
BA model).
See Figure~\ref{fig:rec_tree2} for an illustrative example. 

\begin{figure}
  \begin{center}
    \includegraphics[width=.6\linewidth]{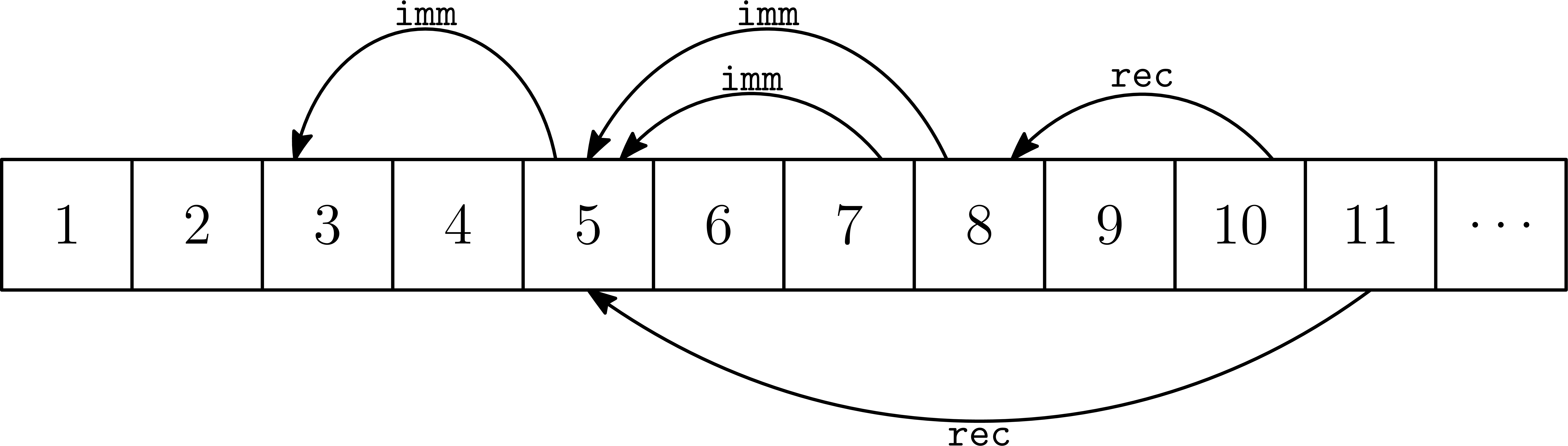}
  \end{center}
  \caption{\color{black} \textbf{The pointers tree.} The parent of node $5$ is node $3$ and the children of node $5$ are nodes $7$, $8$, and $10$ (since $10$ has a recursive pointer to node $8$). Node $11$ is a child of node $3$ since it has a recursive pointer to node $5$ (in particular, it is not a child of node $5$).}
  \label{fig:rec_tree2}
\end{figure}

In this section we consider the subtask of giving access to a random $UT$, together with the
flags of each pointer. Ignoring the flags, this section thus gives an on-the-fly random access generator
 for the extensively studied model of random recursive trees (cf.~\cite{survey_trees}).

We define the following queries.
\begin{itemize}
\item $(i,flag) \leftarrow \up(j)$:  $i$ is the parent of $j$ in the tree, and $flag$ is the
associated flag.
\item $i \leftarrow \nuct(j,k,flag)$, where $k \geq j$:  $i$ is the least numbered node $i>k$ such
that the parent of $i$ is $j$ and the flag of that pointer is  of {\color{black} type} $flag$.
If no
such node exists then $i$ is $n+1$.
\end{itemize}
{\color{black}
Given a query $\nuct(j,k,\cdot)$, we assume that $k$ is bounded above by the largest value returned by $\nuct(j,\cdot,\cdot)$ thus-far (and $j$ if it is the first time $\nuct(j,\cdot,\cdot)$ is executed).
Clearly, even under this assumption the neighbors of every node can be revealed one by one by repeated calls to $\nuct$ (by setting $k$ to be the returned value of the former execution of $\nuct$).
In this case the output of the queries is the entries of the adjacency list, in increasing order (where the entries with the wrong flag are filtered).
}

The ``ideal'' way to implement $\nuct$ is to go over all $n$ nodes, and for each node $j$ (1) uniformly at random
choose its parent  in $[1,j-1]$, (2) uniformly at random chose the associated flag
in $\{\direct,\recursive\}$. Then store  the  pointers and  flags, and answer the queries by
accessing this data structure.

\ShortVersion
In this section we give an {\em on-the-fly} generator that answers the above queries. We start with some notations.
We say that $j$ is \emph{exposed} if $u(j)\neq \mathsf{nil}$ (initially all pointers $u(j)$ are set to $\mathsf{nil}$).  We denote the set of
all exposed vertices by $F$.  We say that $j$ is \emph{directly} exposed if $u(j)$
was set during a call to  $\nuct(i,\cdot,\cdot)$.  We say that $j$
is \emph{indirectly} exposed if $u(j)$ was determined during a call to
$\up(j)$.
\ShortVersionEnd
\LongVersion
 In this section we give an {\em on-the-fly} generator that answers the above
queries.
{\color{black} As explained in more detail in the sequel, randomly selecting the parent of a node, to answer a $\up$ query, is a rather easy task (even if the generator already has a state). The challenge is in randomly selecting the children of a node which corresponds to \nuc\ queries. In this case we need to randomly select (according to the appropriate distribution) the first child of $j$ between \{j+1, \ldots, n\} and then the second child and so on.}
In what follows, we start with a na\"ive, non-efficient implementation that illustrates the task to be done. Then we give our efficient implementation.

\subsection{Notations}
We say that $j$ is \emph{exposed} if $u(j)\neq \mathsf{nil}$ (initially all pointers $u(j)$ are set to $\mathsf{nil}$).
  We denote the set of
all exposed vertices by $F$.
\LongVersionEnd
As a result of answering and processing \nuct\ and \up\ queries, the on-the-fly generator
commits to various decisions (e.g., prefixes of adjacency lists).
These commitments include
edges but also non-edges (i.e., vertices that can no longer serve as $u(j)$ for a certain $j$).
{\color{black}Therefore each query may change the state of the generator.
Note that the answers of the generator to queries depend on its state, thus,
the queries $\up$ and $\nuct$ are also a function of this state (which is not given as a parameter).

\subsubsection{The front of a node}
For every node $i\in \{1, \ldots , n-1\}$, the generator saves the {\em front} of $i$, which is roughly speaking, a value $k > i$ for which it holds that (1) $u(k)=i$;  and (2) the generator decided for every node $j \in [i+1,k-1]$ whether $u(j)=i$ or not.
Formally, at any given time, $\front(i)$ is a pointer to a node in $[i+1, n+1]$ which has the following properties.
 \begin{enumerate}
\item Initially $\front(i) =  \nil$. The first time $\front(i) \neq  \nil$ is after the first $\nuct(i,\cdot,\cdot)$ query is issued.
 \item If $\front(i) = \nil$, then for any node in $j\in [i+1, n]$ for which $u(j) = \nil$ it is possible that $u(j)$ will be set to $i$ in the future.
 \item If $\front(i) = k$ then
 \begin{enumerate}
  \item for any node in $j\in [i+1, k-1]$ for which $u(j) = \nil$ it is {\em not} possible that $u(j)$ will be set to $i$ in the future
 \item for any node in $j\in [k+1, n]$ for which $u(j) = \nil$ it is possible that $u(j)$ will be set to $i$ in the future
 \end{enumerate}
 \end{enumerate}

\subsubsection{The set of potential parents}

We denote the set of vertices that can become $u$-parents of $j$ at any given time $t$, by $\Phi(j)$ and their number by $\varphi(j)$ (for brevity, we omit $t$ from the notation). The formal definition is as follows.
\begin{definition}\label{def:phi}
At a given time $t$, and for any node $j$, let $\Phi(j)$ and $\varphi(j)$ be defined as follows:
\ShortVersion
$\Phi(j) \triangleq\{ i ~| ~i < j~ \mbox{\tt and}~(\front(i) <j~\mbox{\tt or} ~\front(i)=\nil)\}$, and $\varphi(j)=|\Phi(j)|$.
\ShortVersionEnd
\LongVersion
$$
\Phi(j) \triangleq\{ i ~| ~i < j~ \mbox{\tt and}~(\front(i) <j~\mbox{\tt or} ~\front(i)=\nil)\},~~ \mbox{and}~~\varphi(j)=|\Phi(j)|~.
$$
\LongVersionEnd
\end{definition}

We note that from technical reasons that will become clear below, according to the definition, $\Phi(j)$ is not necessarily empty even if $u(j)$ is already determined.
Moreover, counter intuitively, it might be the case that $u(j) \notin \Phi(j)$.
}

\LongVersion
\subsection{A na\"{\i}ve implementation of $\nuc$}

\begin{figure}

\begin{tcolorbox}[title = {\footnotesize \naivenuc}]
\begin{algorithmic}[1]
{\footnotesize
\Procedure{$\naivenuc$}{$j,k$}
	\State $x\gets k+1$.
	\While{$x\leq n$}
		\If {$u(x) = j$} \textbf{return} $(x)$
		\Else
		    \If {$u(x)=\nil$}
			\State Flip a random bit $c(x)$ such that $\pr[c(x)=1] = 1/\varphi(x)$.
			\If {$c(x)=1$}  \State \textbf{return} $(x)$
			\EndIf
		    \EndIf	
		\EndIf
		\State $x\gets x+1$
	\EndWhile
	\State \textbf{return} $(n+1)$
\EndProcedure
}
\end{algorithmic}
\end{tcolorbox}
  \caption{pseudo code of \naivenuc
  \label{alg:naive_nuc}
  }
\end{figure}

\begin{figure}
  \begin{center}
    \includegraphics[width=.6\linewidth]{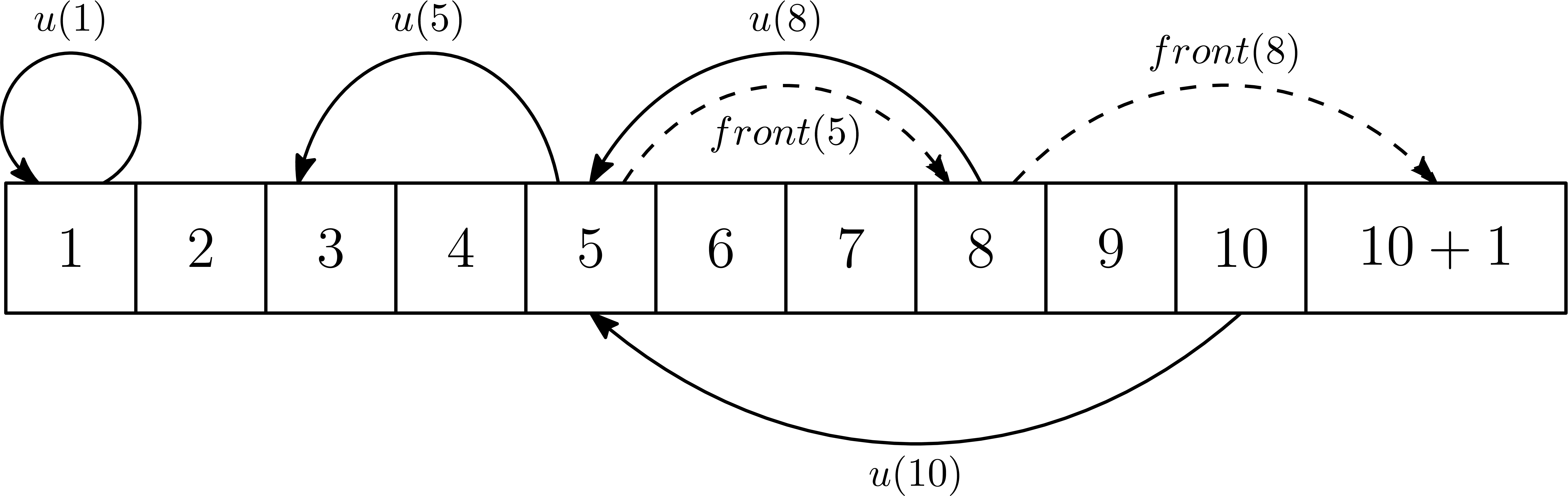}
  \end{center}
  \caption{\color{black} \textbf{The state of the generator}. The state of the generator after performing a $\nuc(5)$ and a $\up(10)$ queries. After the first query, which returns $8$, the parent of node $5$ is determined (in order to keep Invariant~\ref{assume:father first}) and set to be node $3$ and $\front(5)$ and the first child of node $5$ are set to be $8$. Consequently, in order to keep Invariant~\ref{inv:inv1} the next child of node $8$ is also discovered and is set to $10 +1$ (it has no children). After the second query, the parent of node $10$ is set to be node $5$. However, this does not change $\front(5)$ and so, potentially, node $9$ could also pick node $5$ as its parent (or any other node in $\{1, \ldots, 7\}$).}
  \label{fig:rec_tree}
\end{figure}

We give a
 na\"{\i}ve implementation of a \nuc\ query, with  time complexity
$O(n)$, with the  purpose of illustrating the main properties of this query  and in order to contrast it with the more efficient implementation later. We do so in a simpler manner without looking into the ``flag''.
 \LongVersion
The na\"{\i}ve implementation of  \nuc\  is listed in Figure~\ref{alg:naive_nuc}.
This implementation,
and that of \up, share an array of pointers $u$,  both updating it.
  A query $\nuc(i,k)$ is processed by scanning the vertices one-by-one starting from
$k+1$. If $u(x)=i$, then $x$ is the next child.
If $u(x)$ is $\nil$, then
a coin $c(x)$ is flipped and $u(x)=i$ is set when $c(x)$ comes out $1$; the probability that $c(x)$ is $1$ is
 $1/\varphi(x)$. If $c(x)=0$,  we proceed to the next
vertex. The loop ends when some $c(x)$ is $1$  or all vertices have been exhausted. In the latter case
 the query returns $n+1$.

{\color{black}
The correctness of  \naivenuc, i.e., the fact that the graph is generated according to the required probability distribution, is based on the observation that
given that $u(x)$ has not been determined yet,
all the vertices in $\potenf(x)$ are equally likely to serve as $u(x)$.  Note that the  description above   does not
explain how $\varphi(x)$ is computed, as explained next, this is one of the main challenges in designing our on-the-fly generator.  }
\LongVersionEnd

\subsection{The challenge in obtaining an efficient implementation of \nuc}\label{ef.sec}

\ShortVersion
We first shortly discuss the challenges on the way to an efficient implementation of
\nuc.
Observe that before the  first $\nuc(j)$ query, for a given $j$, is issued, the probability for
any $x >j$ to be a $u$-child of $j$  is $1/\varphi(x)$, because all nodes $x'<x$ can still be
the $u$-parent of $x$. But once $\nuc(\cdot)$ queries are  issued, this  may no  longer
be the case. For example, if $x> \front(j')$, then, even if the $u$-parent of $x$ is not
 yet determined, $j'$ is no longer an option to be the $u$-parent of $x$. This renders the calculation
 of $\pr[u(x)=j]$  more complicated and more computation-time consuming,
 which renders the process of selecting the next child of a node $j$ non-efficient.
 In the rest of this section we show how to overcome these difficulties and give a
 procedure that selects the next child, with the appropriate probability
 distribution, using  $\polylog(n)$ random bits and  in $\polylog(n)$ time, and while
  increasing the space by  $\polylog(n)$. This procedure will be at the heart of our
  efficient implementation of $\nuc$.
 \ShortVersionEnd

\LongVersion
We first shortly discuss the challenges on the way to an efficient implementation of
\nuc. Consider the simple special case where the only two queries issued are,
 for some $j$, a
 single $\up(j)$ query followed by a single $\nuc(j)$ query (to simplify this discussion we assume that the the value of $k$ is globally known).
 Consider the situation after
   the query $\up(j)$.
   At this point, every node $x\in [j+1,n]$ may be a $u$-child of $j$ (namely, $u(x)$ may be set to $j$). 
In particular, since $\front(i)$ is initially $\nil$ for every $i$, it holds that $\varphi(x)=x-1$ and
$\pr[u(x)=j] = 1/(x-1)$.  Let $P_x$ denote the probability that node
$x$ is the first child of $j$. Then $P_x=\frac{1}{x-1}\cdot
\prod_{\ell=j+1}^{x-1} (1-\frac{1}{\ell-1}) = \frac{j-1}{(x-1)(x-2)}$ and for $P_{n+1}$ (i.e., $j$ has no child)
$P_{n+1}=\frac{j-1}{n-1}$.
{\color{black} As explained in the sequel,} each of the probabilities $P'_{k}=\sum_{x=j+1}^{k}P_x$ can be calculated in $O(1)$ time, therefore,
this random choice can be done in $O(\log n)$ time by choosing uniformly at random a number
 in $[0,1]$ and performing a binary search on $[j+1,n+1]$ to find which index it represents (see a more
  detailed and accurate statement of this procedure below).
However, in general, at the time of a certain \nuc\ query,
limitations may exist, due to previous queries, on the possible consistent values of  certain pointers $u(x)$.
There are two types of limitations:
\begin{inparaenum}[(i)]
\item $u(x)$ might have been already determined, or
\item $u(x)$ is still $\nil$ but  the option of $u(x)=i$ has been excluded since  $\front(i)>x$.
\end{inparaenum}
These limitations change the probabilities $P_x$ and $P'_x$, rendering them
more complicated and time-consuming to compute, thus rendering the above-defined  process
not efficient (i.e., not doable in $O(\log n)$ time). 
In the rest of this section we define and analyze a
 modified procedure that uses $\polylog(n)$ random bits, takes  $\polylog(n)$ time, and
  increases the space {\color{black} (that is used to store the state of the generator)} by  $\polylog(n)$. This procedure will
 be at the heart of the efficient implementation of \nuc.
\LongVersionEnd

\subsection{The invariants regarding the state of the generator}

\ShortVersion

In the implementation we maintain the following two invariant.
\ShortVersionEnd
\LongVersion

In the implementation of the on-the-fly generator of the pointers tree we will maintain two invariants that
are described below.
We will later discuss the cost  (in running time and space) of  maintaining  these invariants.
\LongVersionEnd

{\color{black} The purpose of these invariants is to obtain an efficient computation of the probabilities $P_x$ and $P'_x$ discussed in Subsection~\ref{ef.sec}.
Roughly speaking, if for every node $x \in \{1, \ldots , n-1\}$ it was the case that $\varphi(x+1) - \varphi(x) = 1$, then these probabilities were easy to calculate.
However, since the commitments of the generator impose changes on $\varphi$, this property can not hold for all the nodes.
The invariants ensure that the set of nodes for which this property does not hold are easy to identify and for which the $u$-parent is already determined.
}

\begin{invariant}\label{assume:father first}
  For every node $j$, the first $\nuct(j,\cdot,\cdot)$ query is always
  preceded by a $\up(j)$ query.
  \end{invariant}

We  will use this invariant
to infer that $\front(j)\neq \nil$ implies that  $u(j)\neq \nil$.
{\color{black} (recall that if a $\nuct(j,\cdot,\cdot)$ query was not issued thus far then $\front(j) =  \nil$).}
One can easily maintain this invariant  by introducing a $\up(j)$ query as the first step of
the implementation of the $\nuct(j,\cdot,\cdot)$ query (for technical reasons we do that in a lower-level
procedure \nuc.)

\begin{invariant}\label{inv:inv1}
  For every vertex $j$, $\front(j) \neq \mathsf{nil}$ implies that $\front( \front (j))
  \neq \mathsf{nil}$.
\end{invariant}

The second invariant is maintained by issuing an  ``internal''
$\nuc(\front(j),\front(j))$ query whenever $\front(j)$ is updated.
This is done recursively, the base of the recursion being node $n+1$.
\LongVersion
When analyzing the complexities of our algorithm we will take into account these recursive calls.
\LongVersionEnd
Let $\front^{-1}(j)$ denote the vertex $i$ such that
$\front(i)=j$, if such a vertex $i$ exists;
\LongVersion
(note that there can be at most one such node $i$, except for the case of $j=n+1$);
\LongVersionEnd
otherwise $\front^{-1}(j)=\nil$.
We get that  if $\front^{-1}(j)\neq \nil$, then $u(j)\neq\nil$.
See Figure~\ref{fig:rec_tree} for an illustrative example.

We note that if at a given time we consider a node $j$ such that $u(j)=\nil$ (i.e., its parent in the pointers tree is not yet
determined), then the set $\Phi(j)$ is the set of  all the nodes that can still be the parent of
node $j$ in the pointers tree. {\color{black} As mentioned above,} the set $\Phi$ is however defined also for nodes for which their parent is already
determined.

{\color{black} We are now ready to define the set $K$ which is, as we prove in the sequel, the set of nodes, $i$, for which $\varphi(i+1)=\varphi(i)$.
Moreover, we prove that if $x\notin K$ then $\varphi(i+1)-\varphi(i) = 1$. It is also the case that for $x\in K$ it holds that $u(x) \neq \nil$, as desired}
\begin{definition}
  Let $K$ denote
  the following set:
\ShortVersion
   $ K\triangleq\{i : \front(i) \neq \mathsf{nil} \text{ and }\front^{-1}(i)
    = \mathsf{nil}\}$.
\ShortVersionEnd
\LongVersion
  \begin{align*}
    K&\triangleq\{i : \front(i) \neq \mathsf{nil} \text{ and }\front^{-1}(i)
    = \mathsf{nil}\}~.
  \end{align*}
\LongVersionEnd
\end{definition}

\LongVersion
We shall prove the following lemma.

\begin{lemma}\label{lem:pf}
For any $x \in \{1, \ldots , n-1\}$:
\begin{equation}
\label{eq:aj2}
  \varphi(x+1) - \varphi(x) =
  \begin{cases}
    0& \text{if $x\in K$,}\\
    1& \text{if $x\notin K$}
  \end{cases}
  ~.
\end{equation}
\end{lemma}
\LongVersionEnd

\ShortVersion
The following lemma gives  properties of the series $\{\potenf(x)\}_x$. Proof omitted.
\ShortVersionEnd

Lemma~\ref{lem:pf} follows directly from the following more general claim.
\begin{claim}
  For every $x\in \{1, \ldots , n-1\}$:
\ShortVersion
\begin{inparaenum}[(1)]
\ShortVersionEnd
\LongVersion
\begin{enumerate}
\LongVersionEnd
\item $\potenf(x) \subseteq \potenf(x+1)\subseteq \potenf(x)\cup
  \{x,\front^{-1}(x)\}$.\label{item:pf1}
\item $\potenf(x+1) = \potenf(x)$ iff $x\in K$. \label{item:pf3}
\item  $\varphi(x+1) - \varphi(x) \leq 1$.\label{item:pf4}
\LongVersion
\end{enumerate}
\LongVersionEnd
\ShortVersion
\end{inparaenum}
\ShortVersionEnd
\end{claim}
\LongVersion
\begin{proof}

{\color{black}
We first observe that Item~\ref{item:pf1} follows directly from the definition of $\potenf$ and the properties of $\front$.
The see this, we first note that the fact that $\potenf(x) \subseteq \potenf(x+1)$ follows from that fact that for every $i$ such that $\front(i) < x$ it clearly holds that $\front(i) < x+1$.
On the other hand, there might be only a single node, $i$, such that $\front(i) < x+1$ but $\front(i) \geq x$. This is possible only when $\front(i) = x$, which might be the case only for a single node, which is the $u$-parent of $x$.
Finally, $x\in \potenf(x+1)$ if and only if $\front(x) = \nil$.

  To prove Item~\ref{item:pf3}, observe that by Item~\ref{item:pf1}, $\potenf(x)=\potenf(x+1)$,
  iff $x\notin\potenf(x+1)$ and
  $\front^{-1}(x)\notin\potenf(x+1)$.
  As stated above, $x\in\potenf(x+1)$ iff $\front(x) = \nil$.
  Similarly, $\front^{-1}(x)\notin\potenf(x+1)$ iff $\front^{-1}(x) = \nil$.

}
Finally, to prove Item~\ref{item:pf4} we need to show that it is not possible for
both $x$ and $\front^{-1}(x)$ to belong to $\potenf(x+1)$.  Indeed, if
$\front^{-1}(x)\in \potenf(x+1)$, then there exists a vertex $i$ such that
$\front(i)=x$. Invariant~\ref{inv:inv1} implies that $\front(x)=\front(\front(i))\neq \nil$.
However, $x\in\potenf(x+1)$ implies $\front(x)=\nil$, a contradiction.
\end{proof}
\LongVersionEnd

\ShortVersion
We now describe the implementation of  $\nuct(j,k,flag)$ and $ \nuc(j)$. $\nuct(j,k,flag)$ is
a loop of  $\nuca(j,k)$ until the right flag is found, and $\nuca(j,k)$ is essentially   a call to $\nuc(j)$ (see
Figure~\ref{alg:nuc}).
Note  that if $j$ does not have children larger than $k$, then $\nuca(j,k)$ returns $n+1$.
\ShortVersionEnd

\subsection{Description of the efficient implementation}
\LongVersion
We are now ready to describe the implementation of  $\nuct(j,k,flag)$ and $ \nuc(j)$.
The state of the generator is stored using the following data structures, of which the implementation of \nuc\  (and of \up) makes use of.
\ShortVersion
\begin{inparaenum}[(1)]
\ShortVersionEnd
\LongVersion
\begin{itemize}
\LongVersionEnd
\item An array of length $n$, $u(j)$.
\item An array of length $n$, $flag(j)$.
\item An array of length $n$, $\front(j)$ (We also maintain an array $\front^{-1}(i)$ with the natural definition).
 \item An array  of $n$ balanced search trees, called $\children(j)$, each  holding the set of nodes $i >j$ such that $u(i)=j$.
 {\color{black}The operations we use on the search trees are insert and successor, where $\childinsert(T, j)$ inserts the element $j$ to the tree $T$ and $\childsccsr(T, j)$ returns the smallest elements in $T$ which is larger than $j$.}
 For technical reasons all trees $\children(j)$ are initiated with  $n+1 \in \children(j)$.
\item A number of additional data structures that are implicit in the listing,
described and analyzed in the sequel.
\LongVersion
\end{itemize}
\LongVersionEnd
\ShortVersion
\end{inparaenum}
\ShortVersionEnd

As seen in  Figure~\ref{alg:nuct}, $\nuct(j,k,flag)$ is merely a loop of  $\nuca(j,k)$, and $\nuca(j,k)$ is essentially
 a call to $\nuc(j)$.
The ``real work'' is done in the implementation of $\nuc(j)$ that we describe now.
Note  that if $j$ does not have
 children larger than $k$, then $\nuca(j,k)$ returns $n+1$. 
\LongVersionEnd

If $\front (j) > k$  when $\nuca(j,k)$ is called, then the next child is already fixed and it is
just extracted from the data structures.

Otherwise,
 an interval $I=[a,b]$ is defined, and it will contain the answer of $\nuc(j)$.
Let $a = \front(j)+1$ if $\front(j)\neq \nil$; and
$a=j+1$, if $\front(j)= \nil$.
Let $b= \min\{ \{\ell > \front(j): u(\ell)=j\}\cup \{n+1\}\}$ if $\front(j)\neq \nil$;
and $b= \min\{ \{\ell > j:  u(\ell)=j\}\cup \{n+1\}\}$,
if $\front(j) = \nil$
Observe that
no vertex $x\in F\cap [a,b)$ can satisfy $u(x)=j$. Hence, the answer is in
$I\setminus (F\setminus \{b\})$.

The next child can be sampled according to the desired distribution in
 a straightforward way by going sequentially over the vertices in $I\setminus (F\setminus \{b\})$,
  and tossing for each vertex $x$ a coin that has  probability $1/\varphi(x)$ to be $1$, until indeed one of those coins
  comes out $1$,  or all vertices are exhausted (in which case node $b$ is taken as
   the next child).
   We denote by $D(x)$, $x \in I\setminus F$, the probability that $x$ is chosen when the
   the above procedure is applied.
   This  procedure, however, takes linear time.

  \ShortVersion
In order to start building our efficient implementation for \nuc\ we consider  the same
process, with the same probabilities $1/\varphi(x)$, but this time for  $[a,b)\setminus K$,
rather than $[a,b)\setminus F$.
\ShortVersionEnd
 \LongVersion
  In order to start building our efficient implementation for \nuc\
  we note that by the definition of $K$, $K \subseteq F$,
and we consider a process where
we toss  $|[a,b)\setminus K|$ coins
 sequentially for the vertices in $[a,b)\setminus K$.  The probability that the coin for $x\in [a,b)\setminus K$
is $1$ is still $1/\varphi(x)$. We stop as soon as $1$ is encountered or on $b$ if all
coins are $0$.
\LongVersionEnd
The vertex on which we stop, denote it $x$, is a {\em candidate next $u$-child}. If $x\in F\setminus K \setminus \{b \}$,
then $x$ cannot be a child of $j$ {\color{black} (because it already has a $u$-parent)} so we proceed by repeating the same process {\color{black} recursively}, but with the interval $[x+1,b]$
instead of the interval $[a,b]$.
\LongVersion
We denote by $D'(x)$, $x \in I\setminus F$, the probability that $x$ is chosen when this procedure is applied.
\LongVersionEnd

\subsubsection{Efficiently selecting a $u$-child candidate}
We  now build our efficient procedure that selects the candidate, without sequentially going over the nodes.
To this end, observe that the sequence of probabilities of the
coins tossed in the last-described process behaves ``nicely''.  Namely, the probabilities
$1/\varphi(x)$, for $x \in [a,b) \setminus K$,
form the harmonic sequence  starting from $1/\varphi(a)$
and ending in $1/(\varphi(a)+|[a,b)\setminus K|-1)$.
\ShortVersion
Indeed, Lemma~\ref{lem:pf}
\ShortVersionEnd
\LongVersion
Indeed, Eq.~\eqref{eq:aj2}
\LongVersionEnd
implies that if vertex $i$ is the smallest vertex in $I\setminus K$, then
$\varphi(i)=\varphi(a)$ and an increment between $\varphi(x)$ and
$\varphi(x+1)$ occurs if and only if $x\notin K$.
Let $s= | [a,b) \setminus K|$ and let $P_h$,
$ 0 \leq h \leq s$ be the probability that the node
of rank $h$ in $([a,b) \setminus K) \cup \{b \}$ is chosen as candidate in the sequential procedure defined above.
Since $1/\varphi(x)$ forms the  harmonic sequence for $x\in [a,b)\setminus K$,
we can, given $\varphi(a)$,
 calculate in $O(1)$ time,  for any $0 \leq i \leq s+1$, the probability $P'_i= \sum_{q < i} P_q$ (i.e., the probability that a node of some rank
$q$, $q< i$, is chosen).   Indeed, for $i=0$, $P_i=\frac{1}{\varphi(a)}$;
 for  $0 < i < s$,  $P_i=\frac{1}{\varphi(a)+i}\cdot \prod_{\ell=0}^{i-1}
\left(1-\frac{1}{\varphi(a)+\ell}\right) = \frac{\varphi(a)-1}{(\varphi(a)+i-1)(\varphi(a)+i)}$;
  and for $i=s$, $P_{s}=\prod_{\ell=0}^{s-1} \left(1-\frac{1}{\varphi(a)+\ell}\right) =
\frac{\varphi(a)-1}{\varphi(a)+s-1}$.
Hence,  for $0 \leq  i \leq  s$,
$P'_i = 1-\frac{\varphi(a)-1}{\varphi(a)+(i-1)}$,
and for $i=s+1$, $P'_{s+1}=1$.
This allows us  to simulate   one iteration  (i.e., choosing the next  {\em candidate next $u$-child}) by choosing
uniformly  at random a single number in $[0,1]$, and then performing a binary search over $0$ to $s$   to
decide what rank $h$  this number  ``represents''.
After the  rank $h \in [0,s]$ is selected, $h$ is then mapped to the vertex of rank $h$ in  $([a,b) \setminus K) \cup \{b \}$,
denote it $x$, and this is the {\em candidate next $u$-child}. As before,
if $x\in F\setminus K \setminus \{b \}$, then $x$ cannot be a child of $j$ so we ignore it and proceed in the same way,
this time with the interval $[x+1,b]$.
We denote by $\hat{D}(x)$, $x \in I\setminus F$ the probability that $x$ is chosen  when
   this  third procedure is applied.
   See~Figure~\ref{alg:toss} for a formal definition of this procedure and that of \nuc.

 Observe that this
procedure
takes $O(\log s)$ time (see  Section~\ref{sec:complexities} for a formal statement of the time
and randomness complexities).
We note that  we cannot perform this selection procedure in the same time complexity for
the set $[a,b)\setminus F$, because
we do not have a way to calculate each and every probability $P'_i$, $i \in [a,b)\setminus F$, in $O(1)$ time,
even if $\varphi(a)$ is given.

To conclude the description of the implementation of $\nuc$, we give the following lemma which states that
 the probability distribution on the next child is the
  same for all three processes described above.
    \ShortVersion
  The (technical) proof is omitted.
  \ShortVersionEnd
 \begin{lemma}
 \label{le:equal_prob}
 For all $x \in I\setminus F$,  $\hat{D}(x)={D}(x)$.
 \end{lemma}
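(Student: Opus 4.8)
The plan is to establish the two equalities $\hat{D}(x)=D'(x)$ and $D'(x)=D(x)$ separately for every $x\in I\setminus F$; chaining them gives the claim. All three procedures return a value in $([a,b)\setminus F)\cup\{b\}$ and define probability distributions over this set, so it in fact suffices to treat $x\in I\setminus F$, as the lemma states, equality at $b$ then following by normalization. Note that $I\setminus F=[a,b)\setminus F$, since $b\in F$ (it is an indirectly exposed child of $j$). Throughout I use that $K\subseteq F$, which is immediate from the definition of $K$, so that $[a,b)\setminus K\supseteq[a,b)\setminus F$, the extra vertices being exactly those in $(F\setminus K)\cap[a,b)$.

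The equality $\hat{D}(x)=D'(x)$ records that the binary-search procedure faithfully simulates the sequential procedure defining $D'$. In a single iteration on an interval $[a',b]$, both select a candidate of rank $h$ in $([a',b)\setminus K)\cup\{b\}$: the sequential procedure stops at rank $h$ with probability exactly $P_h$, and the binary-search procedure draws a uniform number in $[0,1]$ and inverts the cumulative distribution $P'_i=\sum_{q<i}P_q$, hence also outputs rank $h$ with probability $P_h$. That the stated closed forms for $P_i$ and $P'_i$ coincide with the sequential stopping probabilities is a direct computation using the harmonic form of $\{1/\varphi(x)\}$ on $[a,b)\setminus K$, guaranteed by Lemma~\ref{lem:pf}. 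Since the two procedures then apply identical retry logic, rerunning on $[x+1,b]$ precisely when the candidate lies in $F\setminus K\setminus\{b\}$, an induction on the number of retries yields $\hat{D}(x)=D'(x)$ for all $x$.

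The heart of the proof is $D'(x)=D(x)$: the $D'$-procedure tosses coins for the extra vertices in $(F\setminus K)\cap[a,b)$, whereas the $D$-procedure ignores them, and I must show this does not alter the output distribution. I would introduce an auxiliary procedure $D''$ that behaves like the $D'$-procedure but treats every vertex of $F\setminus K$ as an automatic tail, never tossing a coin for it and never restarting on its account. First, $D''=D$ trivially: after deleting the $F\setminus K$ vertices the coin set becomes $[a,b)\setminus K\setminus(F\setminus K)=[a,b)\setminus F$, no restart is ever triggered, and the procedure reduces verbatim to the one defining $D$. Second, $D''=D'$ by induction on $|[a',b)\setminus K|$, comparing the two procedures on a common interval $[a',b]$. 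Let $y$ be the smallest element of $[a',b)\setminus K$. If $y\in[a',b)\setminus F$, both procedures toss the same coin for $y$, return $y$ on heads, and on tails recurse on $[y+1,b]$, so the inductive hypothesis applies. If instead $y\in F\setminus K$, then in the $D'$-procedure a head at $y$ triggers a restart on $[y+1,b]$ while a tail continues past $y$ on $[y+1,b]$; since the coins for $y+1,\dots$ are drawn on the fly and are independent of the outcome at $y$, both branches induce the distribution of the $D'$-procedure run afresh on $[y+1,b]$, whereas the $D''$-procedure simply skips $y$ and runs on $[y+1,b]$. By the inductive hypothesis these coincide, so the effect of the coin at $y$ cancels and $D''=D'$. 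Combining the two facts gives $D'=D$.

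The step I expect to require the most care is the $F\setminus K$ case in the induction for $D'=D''$: one must argue precisely that ``restart on $[y+1,b]$'' (fresh coins) and ``continue past $y$'' (the same sequential stream) induce the \emph{identical} conditional distribution on the remainder of the run. This rests on the on-the-fly nature of the model, where the coin of a vertex is drawn only when that vertex is examined, so the coins for $[y+1,b)$ are fresh and independent of the event at $y$ in both branches. Making this coupling explicit, by conditioning on the history up to $y$ and invoking independence of the future coins, is the one place where the argument is more than bookkeeping; everything else is an exact rewriting of the coin set or a cumulative-distribution identity.
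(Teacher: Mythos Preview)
Your proposal is correct and follows essentially the same approach as the paper: both split the claim into $\hat{D}=D'$ (via CDF inversion) and $D'=D$ (via the observation that the coin at a vertex of $F\setminus K$ is irrelevant, since heads and tails both lead to processing $[y+1,b]$). The paper records the latter by writing down the closed form $D'(y_j)=p(y_j)\cdot\prod_{i<j,\,y_i\in I\setminus F}(1-p(y_i))$ directly, whereas you reach the same conclusion through the auxiliary procedure $D''$ and an induction on $|[a',b)\setminus K|$; the content is the same. One small inaccuracy: your remark that ``$b\in F$ (it is an indirectly exposed child of $j$)'' is not quite right when $b=n+1$, but this does not affect the argument.
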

 \LongVersion
 \begin{proof}
 To prove the claim we prove that $\hat{D}(x)={D}'(x)$ and that ${D}'(x)={D}(x)$.

 To prove the latter, denote by $x_1<x_2<\ldots<x_k$ the nodes in the set $I\setminus F$, where $k=|I\setminus F|$,
 and let $p(x_j)=\frac{1}{\varphi(x_j)}$.
 For any $1 \leq j \leq k-1$ $D(x_j)=p(x_j)\cdot \Pi_{i=1}^{j-1}(1-p(x_i))$, and for $x_k$ (which is
 the node denoted $b$ in the discussion above), $D(x_k)=1- \Pi_{i=1}^{k-1}(1-p(x_i))$.

 When we consider the sequential process where one tosses a coin sequentially for all
  nodes in $I \setminus K$ (and not only for the nodes in $I \setminus F$) we extend the
   definition of $D'(\cdot)$ to be defined also for nodes in $ I \setminus  K$. For a node
   $z \in (I \setminus  K) \cap F$, $D'(z)$ is the probability that $x$ is chosen as a
   {\em candidate next $u$-child}. Thus, if we denote by $y_1<y_2<\ldots<y_{\ell}$,
   $\ell=|I\setminus K|$, the nodes in $ I \setminus  K$ we have that
   $D'(y_j)=p(y_j)\cdot \Pi_{1 \leq i < j; y_j \in I\setminus F}(1-p(y_i))$, and for $y_{\ell}$ (which is
 the node denoted $b$ in the discussion above), $D(y_{\ell})=1- \Pi_{1 \leq i < \ell; y_j \in I\setminus F}(1-p(y_i))$.
   Thus, for any $x \in  I \setminus F$, $D(x)=D'(x)$.

We now extend $\hat{D}(\cdot)$ to be defined for all nodes in
    $I \setminus K$.  The assertion $\hat{D}(x)={D}'(x)$,  for any $x \in I \setminus K$, follows from the fact a
    number  $ M \in [0,1]$ is  selected uniformly
     at random and then the interval in which it lies is found.
     That is, $i$ is selected if and only if $ P'_i \leq M < P'_{i+1} $ which, by the definitions of $P_i$ and $P'_i$, occurs
     with probability $P_i=D'(x_i)$.
 \end{proof}
 \LongVersionEnd

\begin{figure}
\begin{minipage}{0.44\textwidth}
\begin{tcolorbox}[title ={\footnotesize \nuct \\ Returns the least $i >k$, $i$  is a  u-child of $j$, $i$ has flag ``flag''. \\ Assumes that $k \leq \front(j)$}]
 \begin{algorithmic}[1]
{\footnotesize
 \Procedure{$\nuct$}{$j,k,flag$}
 	\State $x \gets k$
	 \Repeat
	      \State $ x \gets \nuca(j,x) $
	    \Until{$flag(x) = flag$ \mbox{or} $x=n+1$}
  	   \State \textbf{return} $x$
 \EndProcedure
}
\end{algorithmic}
\end{tcolorbox}
\begin{tcolorbox}[title ={\footnotesize \nuca \\ Returns the least $i \geq k $, $i$  is a  u-child of $j$.  \\ Assumes that $k \leq \front(j)$.}]
\begin{algorithmic}[1]
{\footnotesize
 \Procedure{$\nuca$}{$j,k$}
        \State If ($k \geq n$) \textbf{return}  $(n+1)$
	\State $q \leftarrow \childsccsr(\children(j),k)$
	\If {$q \leq \front(j)$}
	   \State \textbf{return}  $q$
	\Else
	      \State \textbf{return}  $\nuc(j)$
         \EndIf
\EndProcedure
}
\end{algorithmic}
\end{tcolorbox}
\begin{tcolorbox}[title ={\footnotesize \up \\ Returns the $u$-parent of  $j$.}]
 \begin{algorithmic}[1]
{\footnotesize
 \Procedure {$\up$}{$j$}
 \If {$u(j) = \nil$}
 	\State $u(j) \gets_{R} [1,j-1]$
         \State $flag(j) \leftarrow_{R}\{\direct,\recursive\}$
         \State $\childinsert(\children(u(j)),j)$
           \EndIf
 \State  \textbf{return} $(u(j),flag(j))$
\EndProcedure
}
\end{algorithmic}
\end{tcolorbox}
\end{minipage}
\caption{Pseudo code  of the pointers tree generator (part 1)
\label{alg:nuct}
\label{alg:u-parent}
}
\end{figure}

\begin{figure}
\begin{minipage}{0.59\textwidth}
\begin{tcolorbox}[title ={\footnotesize \nuc  \\ Returns  the least $i  > \front (j) $  which is a  u-child of $j$.}]
 \begin{algorithmic}[1]
{\footnotesize
 \Procedure{$\nuc$}{$j$}
 		 \State $(p,t) \gets \up(j)$
                   \State If ($\front(j) \geq n$) \textbf{return}  $(n+1)$
 		  \State $a \gets
				\begin{cases}
				  \front(j)+1\ &\text{if $ \front(j)\neq\nil$}\\
				  j+1& \text{if $\front(j)=\nil$}
				\end{cases}
			$
		\State 	$b \gets
		                 \begin{cases}
				 \childsccsr(\children(j),\front(j))  &\text{if $\front(j)\neq\nil$}\\
				 \childsccsr(\children(j),j)& \text{if $ \front(j)=\nil$}
				\end{cases}	
				$	
		  \Repeat
			\State $s\gets |[a,b) \setminus K|$
			\State $h \gets \toss (\varphi(a),s+1)$
			 \If{$h=s$}
			      \State  \textbf{return} $b$
			 \Else	
			        \State $x\gets $ the vertex of rank $h$ in $[a,b)\setminus K$
			        \If{$u(x)=\nil$}
			           \State $u(x)=j$
			           \State $flag(x) \leftarrow_{R}\{\direct,\recursive\}$
			           \State $\childinsert(\children(j),x)$
			           \State $\front(j)\gets x$
			           \State  $\front^{-1}(x) \gets j$
          			   \State if ($\front(x)=\nil$) $\nuc(x)$
			           \State \textbf{return} $(x)$
			       \Else $~~$     /* i.e., if $u(x)\neq \nil$ */
			           \State $a\gets x+1$
                                \EndIf  	
                          \EndIf
                 \Until{forever}
\EndProcedure
}
\end{algorithmic}
\end{tcolorbox}
\end{minipage}
\begin{tcolorbox}[title = {\footnotesize \toss  \\ Returns a random rank $0 \leq y \leq t-1$.}]
 \begin{algorithmic}[1]
{\footnotesize
  \Procedure {$\toss$}{$\xi,t$}
	 \State  $\alpha \gets n^c$ {\tt (for some constant $c>1$)}.	
	 \State Choose uniformly at random  an integer  $M \in [0,\alpha]$
	 \State $H \gets M\cdot \frac{1}{\alpha}$
 	 \State Using binary search on  $[0,t-1]$ find $0 \leq y \leq t-1$  such that
 			 $P'_y \leq H  < P'_{y+1}$  \\
			     \hspace{3cm} {\tt (where, for $0 \leq y \leq t-1$, $P'_y=1 -\frac{\xi-1}{\xi+(y-1)}$, and $P'_t=1$) }
	 \If{$(H+1)\frac{1}{\alpha} \leq P'_{y+1}$\label{if.line}}	
	    \State \textbf{return}  $y$
	\Else
	    \State  $\alpha \gets \alpha \cdot\Pi_{y=0}^{t-1}(P'_{y+1}-P'_{y})$\label{line.10}
	    \State Choose uniformly at random  an integer  $M \in [0,\alpha]$
	    \State $H \gets M\cdot \frac{1}{\alpha}$
	    \State Using binary search on  $[0,t-1]$ find $0 \leq y \leq t-1$  such that
 			 $P'_y \leq H < P'_{y+1}$ \\
			     \hspace{3cm} {\tt (where, for $0 \leq y \leq t-1$, $P'_y=1 -\frac{\xi-1}{\xi+(y-1)}$, and $P'_t=1$) }
	    \State \textbf{return}  $y$
	\EndIf
 \EndProcedure
}
\end{algorithmic}
\end{tcolorbox}
\caption{Pseudo code  of the pointers tree generator  (part 2)
\label{alg:nuc}
\label{alg:toss}
}
\end{figure}

\LongVersion
\subsection{Implementation of \up}
\LongVersionEnd

The implementation of \up\ is  straightforward
 (see~Figure~\ref{alg:u-parent}).
However,  note that updating the
 various data structures,  while  implicit in the listing, is  accounted for in the time
 analysis.

\LongVersion
\subsection{Analysis of the pointer tree generator}
\LongVersionEnd
\ShortVersion
\subsection{Analysis of the pointer tree generator}
\ShortVersionEnd
\label{sec:complexities}

We first give the following claim that we later use a number of times.
\begin{lemma}
\label{le:recursive_calls}
With high probability,  for each and every 
call to $\nuc$, the  size of the recursion tree of  that call,  for calls to \nuc, is $O(\log n)$.
\end{lemma}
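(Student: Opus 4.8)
The plan is to show that the recursion tree of any $\nuc$ call is in fact a simple path, that this path traces a descending path in the pointers tree $UT$, and that such paths have length $O(\log n)$ with high probability by \claimref{cl:low_degree_and_height}.

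First I would inspect the pseudocode of \nuc\ in \figref{alg:nuc} and observe that a single invocation of $\nuc(j)$ issues \emph{at most one} recursive call to \nuc. Indeed, the only recursive \nuc-call is the line ``if ($\front(x)=\nil$) $\nuc(x)$'', which sits inside the branch $u(x)=\nil$ and is immediately followed by \textbf{return} $(x)$. The other exits of the \textbf{Repeat} loop --- returning $b$ when $h=s$, or merely advancing $a \gets x+1$ when $u(x)\neq\nil$ --- issue no recursive \nuc-call, and the auxiliary $\up(j)$ at the top of \nuc\ does not invoke \nuc\ at all. Consequently the recursion tree (counting \nuc-invocations) has out-degree at most one at every node, so it is a path $j=x_0 \to x_1 \to x_2 \to \cdots$.

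Next I would identify this path with a descending path in $UT$. Whenever $\nuc(x_i)$ makes its recursive call $\nuc(x_{i+1})$, the code has just executed the block that sets $u(x_{i+1})=x_i$ together with $\front(x_i)\gets x_{i+1}$. Hence $x_i$ is the $u$-parent of $x_{i+1}$, so $x_0,x_1,x_2,\ldots$ is a sequence in which each node is a $u$-child of its predecessor; in particular the indices are strictly increasing and the sequence is a path descending from $x_0$ in $UT$. Therefore the length of the recursion chain is at most the length of the longest descending path starting at $x_0$, which is bounded by the height of $UT$.

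It then remains to bound this height. Since $u(j)$ is drawn uniformly from $[1,j-1]$ for every $j>1$, the tree $UT$ is precisely a random recursive tree, so \claimref{cl:low_degree_and_height} applies and guarantees that, with high probability, the height of $UT$ is $O(\log n)$. The hard (though mild) point is the universal quantifier ``each and every call'': rather than bounding each call separately, I would phrase the bound in terms of the height of the \emph{full} random tree $UT$, which is a single high-probability event; since the on-the-fly generator only ever reveals a portion of $UT$ consistent with this distribution, conditioning on that event simultaneously bounds the length of every possible recursion chain by $O(\log n)$, yielding the claim for each and every \nuc\ call.
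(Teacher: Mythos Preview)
Your proof is correct and follows essentially the same approach as the paper: both argue that the recursion tree is a path (since \nuc\ issues at most one recursive call), that this path corresponds to a descending path in the pointers tree $UT$, and then invoke \claimref{cl:low_degree_and_height} to bound the height by $O(\log n)$ with high probability. Your treatment is in fact more careful than the paper's, in that you explicitly justify why the ``each and every call'' quantifier is handled by a single high-probability event on the full tree rather than a union bound over calls.
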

\begin{proof}
Consider the recursive invocation tree that results from a call to \nuc. Observe that (1) by the code of \nuc\
  this tree is in fact a path; and (2) this path corresponds to a path in the pointers tree, where each
  edge of this tree-path is ``discovered'' by the corresponding  call to \nuc. That is, the maximum size of a recursion tree
of a
call of \nuc\ is bounded from above by the height of the pointers tree. By Claim~\ref{cl:low_degree_and_height},
with high probability, this is $O(\log n)$.
\end{proof}

{\color{black}

\subsection{Efficient rolling of dice} \label{toss.sec}
In Algorithm~\ref{alg:toss} we implement a rolling of a dice whose time complexity is, with high probability, $O(\log n)$, as explained next.
The commulative probabilities $P'_y$ of each side of
the dice are a function of $\xi$ and $y$ and are computable in
$O(1)$ time. To determine the outcome of a roll of the dice, we
pick $r = c\log n$ random unbiased bits, which are interpreted as a
binary representation of a subinterval $[\ell_1,\ell_2]$ of $[0,1]$ of
length $2^{-r}$.  We say that the subinterval is good if it is
contained in an interval, the endpoints of which are consecutive
cumulative probabilities. Namely, $[\ell_1,\ell_2]$ is good if there
exists an $i$ such that $[\ell_1,\ell_2)\subseteq [P'_i,
P'_{i+1}]$. Note that in this case, we choose $i$ as the
outcome of the roll of the dice. The number of bad subintervals is
bounded by $t$, which is at most $n$. Hence, the probability that the
subinterval is bad is at most $n\cdot 2^{-r}$. Since $r=c\log n$,
the probability of determining the side of the dice after
$r$ random bits is at least $1-n^{-c+1}$.
Hence, the time complexity in this case is dominated by the time complexity of the random search which is $O(\log n)$.
On the other hand, if the interval we picked is bad (which happens with negligible probability), then we refine the size of the intervals so that all intervals are good (Line~\ref{line.10}).
In this case the time complexity is $O(\max\{t, \log n\})$.

We note that by a standard
technique we could alternatively implement a Las-Vegas algorithm that rolls the dice.
In this case the algorithm continues refining the intervals (by
using additional random bits, one at a time) until the subinterval is good.
Namely, the Las-Vegas algorithm, after each random bit, checks in time $O(\log n)$ if the
subinterval is good by performing a binary search over the commulative
probabilities and adds an additional bit only if the selected interval is bad.

}

\subsubsection{Data structures and space complexity}
\label{sec:space_complexity}
The efficient implementation of \nuc\ makes use of the following data structures.
\begin{itemize}
\item A  number  of arrays of length $n$, $u(j)$ and $flag(j)$, $\front(j)$ and $\front^{-1}(j)$, used to
 store various values  for nodes $j$.
Since we implement arrays by means of search trees, the space complexity of each array is $O(m)$, where $m$ is
 the maximum number of distinct keys stored with a non-null value in that array, at any given time.
 The time
complexity for each
operation on this arrays is $O(\log m)=O(\log n)$ (since they are implemented as balanced binary search trees).

\item For each node $j$, a balanced binary search tree called $\children(j)$,  where $\children(j)$ includes all
 nodes $i$ such that $u(i)=j$ (for technical reasons we define $\children(j)$ to always include node $n+1$.)
\LongVersion
\footnote{So that we maintain low space complexity, for a given $(j)$,
 $\children(j)$ is initialized only at the first use of $\children(j)$, at which time node $n+1$ is inserted .}
\LongVersionEnd
 Observe that for  each  child $i$ stored in one of these trees, $u(i)$  is already determined.
 Thus, the increase, during a given period, in the space used by  the  $\children$ trees is bounded
  from above by the  the number of nodes $i$ for which $u(i)$ got determined during that period.
  For the time complexity of the operations on these trees
  we use a  coarse standard upper bound of $O(\log n)$ on each tree operation.\footnote{In fact we can use the fact that with high
  probability $C_j=O(\log n)$ and
  get, with high probability, an upper bound of   $O(\log C_j)=O(\log \log n)$ on the  time complexity.}

 We store the roots of all non-empty trees $\children(j)$ in an ``array''. Thus, using our implementation of
 arrays as balanced search trees, the space used by this ``array'' is $O(m)$ and the time
  to  access the root of a certain $\children(j)$
is $O(\log m)=O(\log n)$, where $m$ is the number of non-empty trees $\children(j)$ at a given time.

\end{itemize}

The listings of the implementations of the various procedures leave {\em implicit}  the maintenance of  two data structures,   related
to the set $K$ and to the computation of $\varphi(\cdot)$:
\begin{itemize}
\item A data structure that allows one to retrieve the value of $\varphi(a)$ for a given node $a$.  This data structure is implemented
by retrieving the cardinality of {\color{black} the set of nodes that are not potential parent of $a$, i.e., $(a-1)-\varphi(a)$,} for a given node $a$.
The latter is equivalent to counting how many nodes $i <a$ have
$\front(i)\neq \nil$ and $\front(i) \geq a$.
We use two balanced binary search trees (or order statistics trees) in a specific way and have that
by standard implementations of balanced search trees
 the space complexity is $O(k)$ (and all operations are done in time $O(\log k)=O(\log n)$). Here $k$ denotes the number of nodes $i$ such that  $\front(i)\neq \nil$.
\ShortVersion
The details of the implementation are omitted from this extended abstract.
\ShortVersionEnd
\LongVersion
 More details of the implementation of this data structure appear in the appendix (See Section~\ref{se:stabbing_data_structure}).
\LongVersionEnd

\item A data structure that  allows one to find the vertex of rank $h$ in the ordered set
$[a,n+1]\setminus K$. This data structure is implemented  by a balanced binary search tree storing the nodes
 in $K$,
augmented with the queries $\trank_K(i)$ (as in an order-statistics tree{\color{black}~\cite[Sec.~14]{cormen2009introduction}~\footnote{{\color{black}An order-statistic tree is a data structure, that supports two operations beyond the classical balanced binary search tree operations (i.e., insertion, lookup, and deletion), as follows. Informally, given an index $i$ the $\tselect(i)$ operation returns the $i$-th element in the sorted list of the elements that are in the tree. On the other hand, the rank of an element $x$ in the tree is its index in that sorted list.}}}) as well as $\trank_{\bar{K}}(i)$ and $\tselect_{\bar{K}}(s)$, i.e., finding
the element of rank $s$  in the complement of $K$.  To find the vertex of rank $h$ in $[a,n+1]\setminus K$
we use the query $\tselect_{\bar{K}}(\trank_{\bar{K}}(a)+h)$.
 The space complexity of this data structure is $O(k)$,
 and all operations are done in time $O(\log k)=O(\log n)$ or $O(\log^2 k)=O(\log^2n)$ (for the $\tselect_{\bar{K}}(i)$ query).
  Here $k$ denotes the number of nodes in $K$, which is upper bounded by the number of nodes $i$ such that  $\front(i)\neq \nil$.
\ShortVersion
The details of the implementation are omitted from this extended abstract.
\ShortVersionEnd
\LongVersion
More details of the implementation of this data structure appear in the appendix (See Section~\ref{se:rank_data_structure}).
\LongVersionEnd

\end{itemize}

\subsubsection{Time complexity}

\ShortVersion
\noindent {\bf Time complexity of $\toss(\varphi,s$).}
\ShortVersionEnd
\LongVersion
\paragraph{Time complexity of $\toss(\varphi,s$).}
\LongVersionEnd
{\color{black} The time complexity of
this procedure
 is with high probability $O(\log n)$ (see Section~\ref{toss.sec})}

\ShortVersion
\noindent {\bf Time complexity of ``$x \gets$ the vertex of rank $h$ in $[a,n+1]\setminus K$''.}
\ShortVersionEnd
\LongVersion
\paragraph{Time complexity of ``$x \gets$ the vertex of rank $h$ in $[a,n+1]\setminus K$''.}
\LongVersionEnd
This operation is implemented using the data structure defined above,
and takes  $O(\log^2n)$ time.

\ShortVersion
\noindent {\bf Time complexity of $\up(j)$.}
\ShortVersionEnd
\LongVersion
\paragraph{Time complexity of $\up(j)$.}
\LongVersionEnd
{\color{black} As stated in Lemma~\ref{le:complexity_pointers_tree}, the time complexity of \parent\ is $O(\log n)$.}

\ShortVersion
\noindent {\bf Time complexity of  \nuc.}
\ShortVersionEnd
\LongVersion
\paragraph{Time complexity of  \nuc.}
\LongVersionEnd
First consider the time complexity  consumed by  a single invocation of \nuc\ (i.e., without taking into account the time
 consumed by recursive calls of \nuc):\footnote{We talk about an ``invocation'', rather than a ``call'', when we want to emphasize that
  we consider only the resources consumed by a single level of the recursion tree.}
The call to \up\ takes $O(\log n)$ time.
Therefore, until the start of the repeat loop, the  time is $O(\log n)$ (the time complexity of  \childsccsr\  is  $O(\log n)$).
Now, the time complexity of a single iteration of the loop (without taking into account  recursive calls to \nuc) is
 $(O \log^2 n)$ because:
\begin{itemize}
\item Each access to an ``array'' takes $O(\log n)$ time.
\item Calculating $\varphi(a)$ takes $O(\log n)$ time.
\item The call to \toss\ takes  $O(\log n)$ time.
\item Finding the vertex of rank $h$ in $[a,n+1]\setminus K$ takes $O(\log^2 n)$ time.
\item Each of the $O(1)$ updates of $\front(\cdot)$ or $\front^{-1}(\cdot)$ may change the set $K$, and therefore may
 take $O(\log n)$ time to update the data structure involving $K$.
 \item An update of  any given $\children(\cdot)$ binary search tree takes $O(\log n)$ time.
\end{itemize}

We now examine the number of iterations of the loop.
\begin{claim}
\label{cl:number_of_iterations}
With high probability, the number of iterations of the loop in a single invocation of \nuc\ is $O(\log n)$.
\end{claim}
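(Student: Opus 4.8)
The plan is to reduce the number of loop iterations to the number of ``successful coins'' in an equivalent sequential coin-tossing process, and then bound that quantity by a Chernoff argument exploiting the harmonic structure of the probabilities $1/\varphi(\cdot)$.

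First I would observe that each iteration of the repeat loop either terminates (when \toss\ returns $h=s$, so the answer is $b$, or when the chosen candidate $x$ has $u(x)=\nil$ and becomes the next child) or else continues with $a\gets x+1$ because the chosen candidate is exposed, i.e. $x\in F\setminus K$ (the candidate is drawn from $[a,b)\setminus K$, and $u(x)\neq\nil$ places it in $F\setminus K$). Hence the number of iterations equals one plus the number of exposed candidates examined, and these candidates form a strictly increasing sequence $x_1<x_2<\cdots$ inside $[a_0,b)\setminus K$, where $a_0$ is the initial value of $a$. By \lemmaref{le:equal_prob} and the derivation of the probabilities $P'_h$, the law of $x_1$ is exactly that of the position of the first $1$ in a sequence of independent coins $c_y\sim\mathrm{Bernoulli}(1/\varphi(y))$, $y\in[a_0,b)\setminus K$; and, conditioned on $x_i$ being exposed, $x_{i+1}$ is the first $1$ among the same coins restricted to $y>x_i$. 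Since the coins are independent, re-tossing on $[x_i+1,b)$ (as the efficient procedure does) has the same law as continuing the single fixed coin sequence, so I can couple the efficient process to one sequential pass over $\{c_y\}$, under which the number of iterations is at most $1+\sum_{y\in[a_0,b)\setminus K}c_y$.

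Next I would control $X:=\sum_{y}c_y$. By \lemmaref{lem:pf}, as $y$ ranges over $[a_0,b)\setminus K$ in increasing order the values $\varphi(y)$ are the consecutive integers $\varphi(a_0),\varphi(a_0)+1,\dots,\varphi(a_0)+s-1$, with $s=|[a_0,b)\setminus K|\le n$; consequently $\mathbf{E}[X]=\sum_{i=0}^{s-1}\tfrac{1}{\varphi(a_0)+i}\le H_{n}=O(\log n)$, using $\varphi(a_0)\ge 1$. As the coins are independent, a Chernoff bound finishes the estimate: for a large enough constant $C$, $\pr[X>C\log n]\le n^{-c}$. (When $\mathbf{E}[X]=\Theta(\log n)$ a constant multiplicative deviation already suffices; when $\mathbf{E}[X]$ is smaller, the tail bound $\pr[X\ge t]\le(e\,\mathbf{E}[X]/t)^{t}$ with $t=C\log n$ yields the same conclusion.) Since the statement is per invocation, a union bound over the (polynomially many) invocations of \nuc\ then upgrades it to hold simultaneously for every invocation with high probability.

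The step I expect to be the main obstacle is making the coupling in the first paragraph fully rigorous: the efficient procedure draws fresh randomness in each call to \toss\ rather than reusing one coin vector, so I must argue—via independence of the coins together with the explicit closed form of $P'_h$—that the joint law of the candidate sequence, and hence of the iteration count, is identical to that of the sequential process. Once this equivalence is established the Chernoff estimate is routine, requiring only mild care in the choice of constants as $\mathbf{E}[X]$ ranges between $\Theta(1)$ and $\Theta(\log n)$.
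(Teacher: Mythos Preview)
Your proposal is correct and follows essentially the same approach as the paper: dominate the actual number of iterations by the number of candidates ever selected in the process that continues all the way to $b$, express this as $1+\sum_y c_y$ with $c_y\sim\mathrm{Bernoulli}(1/\varphi(y))$, bound the mean by a harmonic sum $O(\log n)$, and apply Chernoff. The only difference is presentational: the paper phrases the reduction as a stochastic domination (pretend every candidate is exposed and keep going until $b$), which sidesteps the coupling concern you flag as your ``main obstacle''---once you dominate, independence of the $c_y$ is immediate from the sequential-coin description via \lemmaref{le:equal_prob}, and no further argument about fresh \toss\ randomness is needed. Your final union bound over all invocations is extraneous for this per-invocation claim (the paper defers that to the final theorem), but harmless.
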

\begin{proof}
We consider a process where the iterations continue
until the selected node is node $b$. A random variable, $R$,  depicting this number  dominates a random
 variable that depicts the actual number of iterations.
For each iteration, an additional node is selected by \toss. By Lemma~\ref{le:equal_prob} the probability that a node
$j<b$ is selected by \toss\ is  $1/\varphi(j)$, and we have that $1/\varphi(j) \leq \frac{1}{j-1}$.
Thus, $R=1+\sum_{j=a}^{b-1}X_j$, where $X_j$ is $1$ iff  node $j$ was selected, $0$ otherwise.
Since $\mu=\sum_{j=a}^{b-1}\frac{1}{\varphi(j)} \leq \log n$,
using Chernoff bound
\LongVersion
\footnote{cf.~\cite{guidedtour}, Inequality (8).}
\LongVersionEnd
we have, for any constant $c>6$,
$P[R > c \cdot \log n] \leq 2^{-c\cdot \log n}=n^{-\Omega(1)}$.
\end{proof}

\noindent We thus have the following.
\begin{lemma}
\label{le:time_single_nuc}
For any given invocation of \nuc, with high probability, the time complexity is  $O(\log^3 n)$.
\end{lemma}

\subsubsection{Randomness complexity}

\LongVersion
Randomness is used in our generator to randomly select the parent of the nodes (in \up)
 and to randomly select a next child for a node (in \toss). We use the common convention that, for any given $m$, one can
 choose uniformly at random
an integer in  $[0,m-1]$ using $O(\log m)$ random bits and in  $O(1)$ computation time.
  We give our algorithms and analyses based on this building block.
\LongVersionEnd

In procedure \up\ we use $O(\log n)$ random bits whenever, for a given $j$,  this procedure is called with parameter $j$ for the first time.

In procedure \toss\ the {\tt if} condition holds with probability $1-1/n^{c-1}$ (where $c$ is the constant used in that procedure).
Therefore, given a call to \toss,
 with probability $1-1/n^{c-1}$ this procedure uses $O(\log n)$ bits.
By Claim~\ref{cl:number_of_iterations}, in each call to \nuc\ the number of times that  \toss\ is called is, \whp,
$O(\log n)$. We thus have the following.

\begin{lemma}
\label{le:single_nuc_rand}
During a given call to  \nuc, w.h.p., $O(\log ^2 n)$ random bits are used.
\end{lemma}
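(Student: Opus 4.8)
The plan is to bound the total number of random bits consumed during a single call to \nuc\ by combining the per-\toss\ randomness cost with a bound on how many times \toss\ is invoked. The statement to prove is \lemmaref{le:single_nuc_rand}: during a given call to \nuc, w.h.p., $O(\log^2 n)$ random bits are used. First I would account separately for the two sources of randomness inside a \nuc\ call. The call to \up\ at the top of \nuc\ consumes $O(\log n)$ bits (once, to select $u(j)$ and its type, if not already set), which is negligible relative to the target bound. The dominant source is the repeated calls to \toss\ inside the repeat loop.

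Next I would count the bits per \toss\ invocation. Examining the listing of \toss, when the \texttt{if} condition $(H+1)\frac{1}{\alpha}\leq Pr_{y+1}$ holds, the procedure draws a single uniform integer $M\in[0,\alpha]$ with $\alpha=n^c$, costing $O(\log \alpha)=O(\log n)$ bits; when the condition fails it redraws, but as argued just before the lemma the condition holds with probability $1-1/n^{c-1}$, so with high probability every \toss\ call costs only $O(\log n)$ bits. I would make this rigorous by a union bound: there are, w.h.p., at most $O(\log n)$ \toss\ calls per \nuc\ (see below), so the probability that \emph{any} of them hits the failure branch is at most $O(\log n)\cdot n^{-(c-1)} = n^{-\Omega(1)}$. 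Hence, w.h.p., every \toss\ call in this \nuc\ invocation uses $O(\log n)$ bits.

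The remaining ingredient is bounding the number of \toss\ calls. Here I would invoke \claimref{cl:number_of_iterations}, which states that w.h.p.\ the number of iterations of the repeat loop in a single invocation of \nuc\ is $O(\log n)$; since each iteration makes exactly one call to \toss, this bounds the number of \toss\ calls per invocation by $O(\log n)$. I must, however, be careful about the distinction between an \emph{invocation} and a \emph{call}: a call to \nuc\ can trigger a recursive \nuc\ (the line \texttt{if} $(\front(x)=\nil)$ $\nuc(x)$), so the total randomness for the whole call is summed over the recursion tree. By \lemmaref{le:recursive_calls}, w.h.p.\ this recursion tree has size $O(\log n)$, so there are $O(\log n)$ invocations, each making $O(\log n)$ \toss\ calls of $O(\log n)$ bits each. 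Multiplying gives $O(\log n)\cdot O(\log n)\cdot O(\log n)=O(\log^3 n)$, which overshoots the claimed $O(\log^2 n)$; so I expect the intended reading is that \lemmaref{le:single_nuc_rand} concerns a single \emph{invocation} of \nuc\ (consistent with the time bound \lemmaref{le:time_single_nuc} being stated per invocation), giving $O(\log n)$ \toss\ calls of $O(\log n)$ bits, i.e.\ $O(\log^2 n)$ bits.

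The main obstacle is the probabilistic bookkeeping: each of the three high-probability events (the iteration-count bound of \claimref{cl:number_of_iterations}, the recursion-size bound of \lemmaref{le:recursive_calls} if needed, and the per-\toss\ success of the \texttt{if} condition) must be combined via a union bound, and I would need to verify that the failure probabilities sum to $n^{-\Omega(1)}$. Since each failure probability is polynomially small and the number of events being unioned is only polylogarithmic, this union bound goes through cleanly, so the genuine difficulty is merely ensuring the ``invocation versus call'' accounting matches the stated exponent rather than any deep estimate.
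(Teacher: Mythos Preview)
Your proposal is correct and follows essentially the same argument as the paper: the paper's justification (the paragraph immediately preceding the lemma) combines the $O(\log n)$-bit cost per \toss\ call (holding with probability $1-1/n^{c-1}$) with \claimref{cl:number_of_iterations} bounding the loop iterations by $O(\log n)$, exactly as you do. Your observation about the ``invocation versus call'' ambiguity is well-taken and matches the paper's intent: in the proof of \lemmaref{le:complexity_pointers_tree} the paper explicitly multiplies the $O(\log^2 n)$ bound of this lemma by the $O(\log n)$ recursion depth from \lemmaref{le:recursive_calls} to obtain $O(\log^3 n)$ bits for a root call, confirming that \lemmaref{le:single_nuc_rand} is meant per invocation despite the wording.
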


 The following lemma states the time, space, and randomness complexities of the queries.
 \begin{lemma}
\label{le:complexity_pointers_tree}
The complexities of \nuct\ and \up\ are as follows.
\begin{itemize}
\item
Given a call to  \up\ the following hold for this call:
\ShortVersion
\begin{inparaenum}[(1)]
\ShortVersionEnd
\LongVersion
\begin{enumerate}
\LongVersionEnd
\item The increase, during the call, of the space used by our algorithm is $O(1)$.
\item The number of  random bits used during that call is $O(\log n)$.
\item The time complexity of that call   is $O(\log n)$.
\LongVersion
\end{enumerate}
\LongVersionEnd
\ShortVersion
\end{inparaenum}
\ShortVersionEnd
\item
 Given an call to  \nuct, with high probability, all of the following hold for this call:
\ShortVersion
\begin{inparaenum}[(1)]
\ShortVersionEnd
\LongVersion
\begin{enumerate}
\LongVersionEnd
\item The increase, during that call, of the space used  by our algorithm is $O(\log^2 n)$.
\item The number of  random bits used during that call is $O(\log^4 n)$.
\item The time complexity of that call   is $O(\log^5 n)$.
\LongVersion
\end{enumerate}
\LongVersionEnd
\ShortVersion
\end{inparaenum}
\ShortVersionEnd
\end{itemize}
 \end{lemma}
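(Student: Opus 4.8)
The plan is to prove the bound on \up\ separately, as it is deterministic and immediate, and then to assemble the bound on \nuct\ from the per-invocation analysis of \nuc\ together with the recursion-tree bound.

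For \up, I would simply read off the listing in Figure~\ref{alg:u-parent}. The procedure performs a constant number of steps: a single array lookup $u(j)$, at most one uniform draw from $[1,j-1]$ and one from $\{\direct,\recursive\}$, and one $\childinsert$. Each array access costs $O(\log n)$ because the arrays are realized by balanced search trees, and the (implicit) update of the data structure holding $K$ also costs $O(\log n)$; this gives the $O(\log n)$ time bound. The random draws consume $O(\log n)$ bits, used only on the first call for a given $j$, and the space increase per call is $O(1)$ (one new key in $u$, $type$, and $\children(u(j))$, each counted as a constant number of words). The three items for \up\ thus hold deterministically, not merely \whp.

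For \nuct, the strategy is to combine three facts already established. First, by Claim~\ref{cl:number_of_iterations}, a single \emph{invocation} of \nuc\ runs $O(\log n)$ loop iterations \whp, and by the per-iteration accounting each iteration costs $O(\log^2 n)$ time, so a single invocation of \nuc\ costs $O(\log^3 n)$ time \whp\ (this is exactly Lemma~\ref{le:time_single_nuc}) and, by Lemma~\ref{le:single_nuc_rand}, uses $O(\log^2 n)$ random bits \whp. Second, by Lemma~\ref{le:recursive_calls}, the recursion tree of calls to \nuc\ generated by one top-level call has size $O(\log n)$ \whp, since it forms a path in the pointers tree whose height is $O(\log n)$ by Claim~\ref{cl:low_degree_and_height}. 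Multiplying the per-invocation time $O(\log^3 n)$ by the $O(\log n)$ invocations in the recursion gives $O(\log^4 n)$ time for a single (fully recursive) call to \nuc, and similarly $O(\log^3 n)$ random bits. Third, \nuct\ repeatedly calls \nuca\ (which is essentially one recursive \nuc) until the sampled flag matches \emph{type}; since each exposed node draws its flag uniformly from $\{\direct,\recursive\}$ independently, the number of \nuca\ calls is stochastically dominated by a geometric random variable with parameter $1/2$, hence is $O(\log n)$ \whp. Multiplying once more by $O(\log n)$ yields the stated $O(\log^5 n)$ time and $O(\log^4 n)$ randomness for \nuct, and the space increase is bounded by the number of newly exposed nodes, which is $O(\log^2 n)$ \whp\ over all the \nuca\ and recursive \nuc\ calls.

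The main obstacle is bookkeeping the \whp\ statements so that they compose correctly. Each of the three multiplicative layers (loop iterations within an invocation, recursive invocations within a call, and repeated \nuca\ calls within \nuct) holds only with high probability, and the per-iteration randomness bound in \toss\ itself only holds with probability $1-1/n^{c-1}$. I would handle this by fixing the polynomially small failure probabilities to be $1/n^{c'}$ for a sufficiently large constant $c'$ and taking a union bound over the $O(\log^2 n)$ total events, so that the overall failure probability remains $1/\poly(n)$; one must check that no dependency between layers spoils the Chernoff and geometric tail arguments, which is clean here because the coin flips in distinct invocations and distinct nodes are independent. The remaining work is the routine product of the per-operation $O(\log n)$ and $O(\log^2 n)$ factors, which I would not grind through in detail.
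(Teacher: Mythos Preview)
Your proposal is correct and follows essentially the same decomposition as the paper: handle \up\ directly from the listing, then for \nuct\ multiply the per-invocation bounds of Lemma~\ref{le:time_single_nuc} and Lemma~\ref{le:single_nuc_rand} by the $O(\log n)$ recursion depth from Lemma~\ref{le:recursive_calls}, and finally by the $O(\log n)$ repetitions inside \nuct\ coming from the uniform flag distribution. Your explicit treatment of how the \whp\ layers compose via a union bound is slightly more careful than what the paper writes, but the argument is the same.
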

 \begin{proof}

During a call to $\up(j)$ the size of the used space increases when a pointer $u(j)$  becomes non-null
 or when additional values are stored in $\children(u(j))$. To select $u(j)$, $O(\log n)$ random bits are used,
  and $O(\log n)$ time is used to insert $j$ in $\children(u(j))$ and to update the data
  structure for the set $K$  (this is implicit in the listing).

 For the analysis of \nuct, we first consider \nuc.
 Observe that by  Lemma~\ref{le:recursive_calls}, \whp, each and every root (non-recursive) call of \nuc\ has
a recursion tree of size $O(\log n)$.  
  In each invocation of \nuc, $O(1)$  variables $\front(j)$ and $u(j)$ may be updated. Therefore, \whp, for all root (non-recursive) calls to \nuc\ it holds that the
  increase in space during this call is $O(\log n)$ (see Section~\ref{sec:space_complexity}).  
  Using Lemmas~\ref{le:single_nuc_rand} and~\ref{le:recursive_calls}
  we have that, \whp, each root  call  of \nuc\   uses  $O(\log^3 n)$ random bits. 
Using Lemmas~\ref{le:time_single_nuc} and~\ref{le:recursive_calls},   we have that, \whp, the
   time complexity of each root call of \nuc\  is $O(\log^4 n)$.

  Because the flags of the pointers are uniformly distributed in $\{\direct,\recursive\}$,
  each call
  to \nuct\ results, \whp, in $O(\log n)$ calls to \nuc. The above complexities are thus multiplied
  by an $O(\log n)$  factor to get the
  (w.h.p.)
  complexities of \nuct.
 \end{proof}

\section{On-the-fly Generator for BA-Graphs}

\begin{figure}
\begin{minipage}{0.65\textwidth}
\begin{tcolorbox}[title ={\footnotesize \nn \\  Returns the next  neighbor of $j$ in the BA-graph.}]
 \begin{algorithmic}[1]
{\footnotesize
 \Procedure{$\nn$}{$j$}
 \If{$first\_query(j) = \consttrue $}    \\
 						   /* first query for $j$ */
  	\State $first\_query(j) \gets \constfalse$
	\State $\hinsert(heap_j,n+1)$
	\State  $\hinsert(heap_j, \nuct(j,j,\direct))$
  	\State \textbf{return} $\fparent(j)$
 \Else  \\
 		 /* all subsequent queries for $j$ */
 	 	\State $r \leftarrow \hextmin(heap_j)$
    		 \If {$r = n+1$}
	  	         \State $\hinsert(heap_j,n+1)$
    			\State \textbf{return} $n+1$
       		\Else
			   \If {$flag (r) = \direct $}
				   \State $   \hinsert(heap_j,\nuct(j,r,\direct))$
				    \State $   \hinsert(heap_j,\nuct(r,r,\recursive))$
			   \Else
			        \State $(q,flag) \leftarrow  \up(r)$
     			 	\State $ \hinsert(heap_j,\nuct(q,r,\recursive))$
				 \State $   \hinsert(heap_j,\nuct(r,r,\recursive))$
 		 	   \EndIf
 		            \State \textbf{return} $r$
        \EndIf
\EndIf
\EndProcedure
}
\end{algorithmic}
\end{tcolorbox}
\end{minipage}
\hfill
\begin{minipage}[t]{0.39\textwidth}
\begin{tcolorbox}[left skip=2pt, title = {\footnotesize \fparent  \\ Returns the parent of $j$ in the  \\ BA-graph.}]
 \begin{algorithmic}[1]
{\footnotesize
\Procedure{$\fparent$}{$j$}
\State $(i,flag) \leftarrow \up(j)$
\If {$flag=\mbox{\direct}$}
	\State \textbf{return} $i$
\Else
 	 \State \textbf{return} $\fparent(i)$
\EndIf
\EndProcedure
}
\end{algorithmic}
\end{tcolorbox}
\end{minipage}
\caption{Pseudo code of the on-the-fly BA generator
\label{alg:next_neighbor}
\label{alg:find-parent}
}
\end{figure}

 Our  on-the-fly generator for BA-graphs {\color{black} (see definition in Section~\ref{section.models})} is called  \flyBA,
 and simply
calls $\nn(v)$  for each query  on node $v$. 
We present an implementation for the \nn\  query, and
 prove its correctness,  as well  as analyze its time, space, and
 randomness complexities.
 The on-the-fly BA generator maintains
 $n$ standard {\tt heaps}, one for each node. The heaps store nodes,  where the order
is the  natural order
 of their serial numbers.
 \LongVersion
 \footnote {For simplicity of presentation we assume that the initialization of the heap occurs at the first insert, and make sure in our use of the heap that no extraction is performed before the first insert.}
 \LongVersionEnd
The  heap of  node $j$  stores some of the nodes already known
 to be neighbors of $j$. 
\LongVersion
 In addition, the generator maintains for  purely technical reasons an array of size $n$,  $first\_query$, indicating
 if a \nn\ query has  been issued for a given node. 
   The implementation of  the \nn\  query works as follows  (see~Figure~\ref{alg:next_neighbor}).
 \LongVersionEnd
  \begin{itemize}
\item  {\em For the  first} \nn($j$) {\em query}, for a given $j$, we proceed as follows.
    We find the parent of $j$
   in the {\color{black} constructed} BA-graph, which is done by
   following, in the pointers tree, the pointers of the ancestors of $j$ until we find an ancestor pointed
   to by an \direct\ pointer (and not a \recursive\ pointer).   See Figure~\ref{alg:find-parent}. In addition,
   we initialize the process of finding neighbors of $j$ to its right (i.e., with a bigger serial number)
   by inserting into the heap of  $j$  the ``final node'' $n+1$ as well as
  the first child of $v$.
  \ShortVersion
  \item
  Observe that any {\em subsequent}  \nn($j$) {\em query}  is to return a {\em child} of $j$ in the BA-graph.
  The children $x$ of $j$ in the BA-graph have, in the pointers tree, a path of $u(\cdot)$ pointers
   starting at $x$ and ending at $j$  with all pointers, except the last
    one, being \recursive\ (the last being \direct).  The query
    has to report
    the children in increasing index number. To this end the {\tt heap} of $j$ is used;
     it stores  some of the children of $j$, {\em  not yet returned by
      a} $\nn(v)$ {\em query}. This
     heap is also updated so that $\nn(j)$  will continue to  return the next child according to the
      index order.
To do so, whenever a node, $r$,  is extracted from the heap, 
 the heap is updated to  include the following:
  \ShortVersionEnd
  \LongVersion
  \item For any {\em subsequent} \nn($j$) {\em query} for node $j$ we proceed as follows.
  Observe that any subsequent query is to return a {\em child} of $j$ in the {\color{black} constructed} BA-graph.
  The children of $j$ in the BA-graph are those nodes $x$ which have, in the pointers tree, a path of $u(\cdot)$ pointers
   starting at $x$ and ending at $j$ and with all pointers on that path, except the last
    one, being \recursive\ (the last one being \direct).  The query $\nn(j)$ has, however, to report
    the children in increasing order of their index. To this end  the {\tt heap} of node $j$ is used;
     it stores at any give time some of the children of $j$ in the BA-graph, {\em  not yet returned by
      a} $\nn(j)$ {\em query}. We further have to update this
     heap so that $\nn(j)$  will continue to return the next child according to the
      index order.
  To this end we proceed as follows. Whenever  node, $r$ is extracted from the heap, in order to be returned as the next child,
  we update the heap to  include the following:
\LongVersionEnd
  \begin{itemize}
 \item  If $r$  has an \direct\ pointer to $j$, then we add to the heap (1) the next node, after $r$, with
  an \direct\ pointer to $j$, and (2) the first node that has a \recursive\ pointer to $r$.
  \item   If $r$  has a \recursive\ pointer to a node $r'$, then we add to the heap (1) the first node, after $r$, with
   a \recursive\ pointer to $r'$, and (2) the first node that has a \recursive\ pointer to $r$.
  \end{itemize}
  \end{itemize}

\LongVersion
The proof of Lemma~\ref{le:heap_correct} below is based on the premise that
 the heap  contains only children of $v$ in the BA-graph, and that it always contains
   the child of $v$ just after the one last returned.
   \LongVersionEnd
\ShortVersion
The proof of the next lemma,  by induction on the number of queries,
 is omitted.
\ShortVersionEnd

\begin{lemma}
\label{le:heap_correct}
The procedure \nn\ returns the next neighbor of $v$.
\end{lemma}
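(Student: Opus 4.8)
The plan is to prove, by induction on the number of issued $\nn(j)$ queries, that their successive return values are exactly the neighbors of $j$ in the BA-graph in increasing index order (and $n+1$ afterwards). First I would record the structural facts used throughout. A node $x$ is a child of $j$ in the BA-graph iff the chain of $u$-pointers from $x$ to $j$ ends with a \direct\ pointer into $j$ and uses \recursive\ pointers before that; equivalently, $x$ lies in the ``recursive subtree'' of some child $r$ of $j$ attached by a \direct\ pointer, where this subtree is generated by following \recursive\ pointers in reverse. Since $u(y)<y$ for every $y$, indices strictly decrease along any $u$-chain, so each such $x$ satisfies $x\ge r\ge r_1$, where $r_1$ is the smallest \direct-child of $j$. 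Hence $r_1$ is the \emph{smallest} BA-child of $j$, while the unique parent of $j$ (of index $<j$) is the smallest neighbor overall; the first $\nn(j)$ query must therefore return the parent, and this reduces to the correctness of \fparent, which I would check by a short induction on the length of the \recursive-pointer chain above $j$.

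For the children I would carry the following invariant, to hold after the first query and after every subsequent one. Writing $m$ for the largest index of a child returned so far ($m=j$ if none), (a) every element of $heap_j$ other than the sentinel $n+1$ is a BA-child of $j$ of index $>m$, and these elements are pairwise distinct; and (b) whenever a BA-child of index $>m$ exists, the smallest such child belongs to $heap_j$. Granting (a)--(b), the next $\hextmin$ extracts precisely the smallest not-yet-reported child (or $n+1$ when none remain), which is the correct next neighbor, so it remains to verify that the heap updates restore the invariant with $m$ advanced to the extracted value $r$.

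The base case is the first query, where $heap_j=\{n+1, r_1\}$ with $r_1=\nuct(j,j,\direct)$; by the structural fact $r_1$ is the smallest BA-child, so (a)--(b) hold with $m=j$. For the step, let $r\ne n+1$ be extracted (the case $r=n+1$ is immediate, as then no child of index $>m$ exists). Distinctness and (a) follow because each BA-child is inserted exactly once: either as the first \recursive-child $\nuct(r,r,\recursive)$ of its $u$-parent when that parent is extracted, or as the successor of its predecessor sibling via the continuation insert, which is $\nuct(j,r,\direct)$ when $type(r)=\direct$ (so $u(r)=j$) and $\nuct(u(r),r,\recursive)$ when $type(r)=\recursive$. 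All inserted nodes have index $>r$ and are BA-children or $n+1$, and the elements surviving the extraction are $>r$ since $r$ was the minimum and the elements are distinct.

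The crux, and the step I expect to be the main obstacle, is re-establishing (b): that the smallest BA-child $x^{*}$ of index $>r$ is in the heap after the update. I would argue through the \emph{trigger} of $x^{*}$, namely the node whose extraction inserts $x^{*}$. This trigger is either the $u$-parent of $x^{*}$ (if $x^{*}$ is the first \recursive-child of its parent) or the preceding sibling of $x^{*}$ in its stream; in both cases it is a BA-child of $j$ of index strictly below $x^{*}$, using $u(y)<y$. Because $x^{*}$ is the smallest BA-child exceeding $r$, its trigger has index $\le r$, and since there is no BA-child strictly between the old $m$ and $r$, every BA-child of index $\le r$ has by now been reported; thus the trigger has already been extracted, or it equals the just-extracted $r$, in which case $x^{*}$ is exactly one of the two nodes inserted in this step. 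Either way $x^{*}\in heap_j$, and having index $>r$ it has not been extracted, completing the induction. The delicate point to get exactly right is this case analysis matching ``trigger of $x^{*}$'' to the specific inserts triggered by extracting $r$ in each of the two $type(r)$ cases.
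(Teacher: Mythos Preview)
Your proof is correct and follows the same overall scheme as the paper---induction on the number of $\nn(j)$ queries, a heap invariant, and a case split on $type(r)$---but the invariant you carry is genuinely different. The paper maintains a \emph{stronger, stream-by-stream} invariant: after $\ell$ queries the heap contains (i) the minimum of $D(j)\setminus M^\ell(j)$, and (ii) for every already-reported child $i$, the minimum of $R(i)\setminus M^\ell(j)$. With this, re-establishing the invariant after extracting $r$ is a direct local check of the two inserts. You instead keep the \emph{minimalist} invariant that only the single smallest unreported child is in the heap, and you recover the inductive step via your ``trigger'' argument: identify the unique event (extraction of the $u$-parent or of the preceding sibling in the appropriate stream) that inserts $x^{*}$, observe that this trigger is a BA-child of index $\le r$, and use the outer induction (returns so far are exactly the neighbors up to $r$, in order) to conclude the trigger has already been extracted. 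Your route is a bit more elegant to state but leans more on the global correctness-so-far to close the step; the paper's route has a heavier invariant but a purely local maintenance check. Both are fine; the one subtlety in yours---that the trigger, being a BA-child $\le r$, must have been \emph{extracted} (not merely ``reported'') because every non-parent return comes from a heap extraction---you do handle, if implicitly.
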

\LongVersion
\begin{proof}

Given a pointers tree we define the following notions:
\begin{itemize}
\item The set of nodes which have an \direct\ pointer to a given node $j$. That is,
 for $1 \leq j \leq n$,  $D(j) \triangleq  \{i  ~|~ u(i)=j, flag(i)=\direct\}$.
 \item The set of nodes which have a \recursive\ pointer to a given node $j$. That is,
 for $1 \leq j \leq n$,  $R(j) \triangleq  \{i  ~|~ u(i)=j, flag(i)=\recursive\}$.
\end{itemize}
Given a BA graph, for any node $1 \leq j \leq n$ and any prefix length $0 \leq \ell \leq n-1$,
we denote by $N^{\ell}(j)$
the set of the first (according to the index number) $\ell$ neighbors of $j$ in the BA graph.

\medskip
In what follows we consider an arbitrary node $j$.
We  consider the actions of  \nn\ (see~Figure~\ref{alg:next_neighbor}).
Let $M^{\ell}(j)$ be the set of nodes returned by the first $\ell$ calls $\nn(j)$.
We first prove
 that the following invariant holds. \\
 Just after call number $\ell \geq 1$ of $\nn(v)$:
\begin{enumerate}
\item
\label{inv:only}
 The heap $heap_j$ contains only neighbors of $j$ in the BA-graph.
 \item
 \label{inv:direct}
 The heap $heap_j$ contains the minimum node in  $ D(j) \setminus M^{\ell}(j)$.
 \item
  \label{inv:recursive}
  Let $q$ be the first neighbor of $j$ in the BA graph.
 The heap $heap_j$ contains, for each node $i \in M^{\ell}(j) \setminus \{q\}$, the minimum node in $R(i) \setminus M^{\ell}(j)$.
\end{enumerate}

We prove that the invariant holds by  induction on $\ell$.
The induction basis, for call number $\ell=1$,  holds since  (1) the first call to $\nn(j)$ results in inserting into  $heap_j$
the first node $x$ which has an \direct\ pointer to node $j$ and (2)  $heap_j$ was previously empty (see~Figure~\ref{alg:next_neighbor}).
Thus all points of the invariant hold after call $\ell=1$.
For $\ell >1$ assume that the induction hypothesis holds for $\ell-1$ and let $r$ be the node returned by the $\ell$'th
call to $\nn(j)$. We claim that the invariant still holds after call $\ell$ by verifying each one of the two cases for
the pointer of $r$ and the insertions into the heap for each such case. \\
If $r$ has an \direct\ pointer, then the following nodes are inserted into $heap_j$:
(1) The first node after $r$ with an \direct\ pointer to $j$.  Since this is a neighbor of $j$  in the BA graph
Point~\ref{inv:only} continues to hold. Since
 $r$, just extracted from the heap, was the minimum node in the heap, Point~\ref{inv:direct} continues to hold.
(2)  The first node after $r$ which has a \recursive\ pointer to $r$. Since this is a neighbor of $j$ in the BA graph
Point~\ref{inv:only} continues to hold; Point~\ref{inv:recursive} continues to hold since nothing has changed for any
other $i\neq r$, $i \in M^{\ell}(j) \setminus \{q\}$, and for $r$ the minium node in $R(i) \setminus M^{\ell}(j)$ is just inserted. \\
If $r$ has a \recursive\ pointer, and let $q$ be the parent of $r$ in the pointers tree,  then the following nodes are inserted into $heap_j$:
(1) The  first node after $r$ which has a \recursive\ pointer to $q$; denote it $x$. Since $x$ is a neighbor of $j$ in the BA graph
Point~\ref{inv:only} continues to hold. Since
 $r$, just extracted from the heap, was the minimum node in the heap, $x$ is the minimum node in
$R(i) \setminus M^{\ell}(j)$ and Point~\ref{inv:recursive} continues to hold (nothing changes for any $q'\neq q$, $q'\in  M^{\ell}(j) \setminus \{q\}$).
(2) The first node after $r$ which has a \recursive\ pointer to $r$. The same arguments as those for the corresponding case when  $r$ has
an \direct\ pointer hold, and thus both Point~\ref{inv:only} and Point~\ref{inv:recursive} continue to hold.\\
This concludes the proof of the invariant.

We now use the above invariant in order to prove that, for any $\ell \geq 1$, $N^{\ell}(j)= M^{\ell}(j)$.
We do this by induction on $\ell$.
For $\ell=1$ the claim follows from the facts the first neighbor of node $j$ is its parent in the BA graph and that the first
call $\nn(j)$  returns the value that $\fparent(j)$ returns. This proves the induction basis.
We now prove the claim for $\ell >1$ given the induction hypothesis for all $\ell' < \ell$.
Let node  $x$  be the $\ell$'th neighbor of $j$.
We have two cases: (1) node $x$ has an \direct\ pointer to $j$; (2) node $x$ has a \recursive\ pointer to another child  of $j$ in
 the BA graph (i.e., to another neighbor of $j$ in the BA graph, which is not  the first neighbor).

\noindent  Case (1): By the induction hypothesis $N^{\ell-1}(j)= M^{\ell-1}(j)$, hence by Point~\ref{inv:direct} of the invariant
$x$ is in the heap $heap_j$ when the $\ell$'th call occurs. Since any node returned by $\nn(j)$ is no longer in $heap_j$, by
Point~\ref{inv:only} of the invariant,
$heap_j$ does not contain any node smaller than $x$. Therefore the node returned by the $\ell$'th call of $\nn(j)$ is node $x$.

\noindent Case (2): Let node $y$ be the parent of node $x$ in the pointers tree, i.e., $u(x)=y$. Since $y$ is a neighbor
of $j$ in the BA graph, and $y <x$, it follows that $ y \in N^{\ell-1}(j)$, and by the induction hypothesis $y \in M^{\ell-1}(j)$.
Moreover, any node $x' <x $  has  $u(x')=y$, $flag(x')=\recursive$ if and only  if it is a neighbor of $j$, hence
any such node $x'$ is in $N^{\ell-1}(j)$, and by the induction hypothesis also in $M^{\ell-1}(j)$. It follows from Point~\ref{inv:recursive}
of the invariant that $x$ is in the heap $heap_j$  when the $\ell$'th call occurs.  Since any node returned by $\nn(j)$ is no longer
in $heap_j$, by Point~\ref{inv:only} of the invariant,
$heap_j$ does not contain any node smaller than $i$. Therefore the node returned by the $\ell$'th call of $\nn(j)$ is node $i$.
This completes the proof of the lemma.
\end{proof}
\LongVersionEnd

\begin{lemma}
For any given root (non-recursive) call of \fparent, with high probability,  that call takes  $O(\log^2 n)$ time.
\end{lemma}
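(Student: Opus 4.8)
The plan is to bound the running time of a root call to \fparent\ by the number of recursive invocations it generates times the cost of each invocation. Examining the listing of \fparent\ (Figure~\ref{alg:find-parent}), each invocation performs a single call to \up\ and then either returns (when the returned flag is \direct) or recurses on the u-parent $i=u(j)$. Since $u(j)\in[1,j-1]$, the sequence of visited nodes $j, u(j), u(u(j)),\ldots$ is strictly decreasing and hence traces a simple path in the pointers tree $UT$ from $j$ toward the root. The recursion halts at the first node on this path whose pointer to its u-parent carries a \direct\ flag; in the worst case it climbs all the way up, so the number of invocations is at most the depth of $j$ in $UT$, which is bounded by the height of $UT$.

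First I would invoke Claim~\ref{cl:low_degree_and_height}: the pointers tree $UT$ is precisely a random recursive tree, since each node $j>1$ points to a uniformly random node in $[1,j-1]$. Hence with high probability its height is $O(\log n)$, which bounds the number of recursive invocations of \fparent\ by $O(\log n)$ with high probability; this is the sole source of the ``with high probability'' in the statement.

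Next I would charge each invocation. The only substantial work in an invocation of \fparent\ is the single call to \up, which by Lemma~\ref{le:complexity_pointers_tree} takes $O(\log n)$ time deterministically, while the flag test and the tail-recursive dispatch cost $O(1)$. Multiplying the $O(\log n)$ per-invocation cost by the $O(\log n)$ bound on the number of invocations yields a total of $O(\log^2 n)$ time with high probability, as claimed.

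There is no deep obstacle here; the statement follows by composing the height bound of Claim~\ref{cl:low_degree_and_height} with the per-call cost of \up\ from Lemma~\ref{le:complexity_pointers_tree}. The one point meriting care is confirming that the recursion depth is governed by the path length in $UT$ rather than by the flags: the flags determine only where the recursion stops and can never extend the path beyond the root, so the height bound applies unconditionally to the number of invocations.
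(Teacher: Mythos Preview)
Your proof is correct, but it bounds the recursion depth differently from the paper. The paper's (one-line) justification appeals to the fact that the flags are independently and uniformly distributed in $\{\direct,\recursive\}$: at each step of \fparent\ the flag is \direct\ with probability $1/2$, so the number of recursive steps until a \direct\ flag appears is geometric and hence $O(\log n)$ with high probability. You instead bound the recursion depth deterministically by the depth of $j$ in the pointers tree and then invoke Claim~\ref{cl:low_degree_and_height} to get the $O(\log n)$ height bound with high probability. Both routes then multiply by the $O(\log n)$ cost of \up\ from Lemma~\ref{le:complexity_pointers_tree}. Your argument has the minor advantage of not depending on the flag distribution at all (so it would survive, say, a biased flag), whereas the paper's argument is a touch simpler since it avoids invoking the random-recursive-tree height bound; either approach is perfectly adequate here.
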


{\color{black}
\begin{proof}
Consider the execution of \fparent\ as stated in Figure~\ref{alg:find-parent}.
Consider the recursive invocation tree that results from a call to \fparent.
Each path in this tree corresponds to a path in the pointers tree.
Since, By Claim~\ref{cl:low_degree_and_height}, w.h.p. the height of the pointers tree is bounded by $O(\log n)$, the lemma follows by Lemma~\ref{le:complexity_pointers_tree}.
\end{proof}
}

\LongVersion
We can now conclude with the following theorem.
\LongVersionEnd
\ShortVersion
The next theorem follows from the code,  standard heap implementation, and
 Lemma~\ref{le:complexity_pointers_tree}.
\ShortVersionEnd
\begin{theorem}
\label{th:comp_one_call_nn}
For any given call of \nn, with high probability,  all of the following hold for that call:
\ShortVersion
\begin{inparaenum}[(1)]
\ShortVersionEnd
\LongVersion
\begin{enumerate}
\LongVersionEnd
\item The increase, during that call, of the space used  by our algorithm is $O(\log^3 n)$.
\item The number of  random bits used during that call is $O(\log^5 n)$.
\item The time complexity of that call is $O(\log^6 n)$.
\LongVersion
\end{enumerate}
\LongVersionEnd
\ShortVersion
\end{inparaenum}
\ShortVersionEnd
\end{theorem}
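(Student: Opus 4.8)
The plan is to observe that a single invocation of \nn performs only a constant number of calls to the pointers-tree primitives \nuct and \up, at most one root call to \fparent (and only on the very first query for a node), together with a constant number of standard heap operations, and then to sum the per-call resource bounds already established for these primitives. First I would inspect the listing in \figref{alg:next_neighbor} and verify, by a case analysis over the two top-level branches (first query versus subsequent query) and, within the subsequent-query branch, over the two cases $type(r)=\direct$ and $type(r)=\recursive$, that in every execution path \nn issues at most two \nuct calls, at most one \up call, at most one root \fparent call, and $O(1)$ heap insertions and extractions.

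Given this structural fact, I would bound each resource separately. For the heap, a standard binary heap holds at most $O(n)$ elements, so each \hinsert and \hextmin runs in $O(\log n)$ time, contributes $O(1)$ to the space, and consumes no random bits; the constant number of such operations therefore contributes $O(\log n)$ time, $O(1)$ space, and no randomness. For the pointers-tree primitives I would invoke \lemmaref{le:complexity_pointers_tree}: each \up call costs $O(1)$ space, $O(\log n)$ random bits, and $O(\log n)$ time, while each \nuct call costs, \whp, $O(\log^2 n)$ space, $O(\log^4 n)$ random bits, and $O(\log^5 n)$ time. For the (at most one) root \fparent call I would use the preceding lemma bounding its time by $O(\log^2 n)$ \whp, together with the observation that \fparent merely walks up a node-to-root path in the pointers tree making one \up call per step; since that path has length $O(\log n)$ \whp\ by \claimref{cl:low_degree_and_height}, its space increase is $O(\log n)$ and its randomness is $O(\log^2 n)$, both dominated by the corresponding \nuct bounds.

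Summing over the constantly many sub-calls, the dominant contribution in every resource comes from the \nuct calls, yielding $O(\log^2 n)$ space increase, $O(\log^4 n)$ random bits, and $O(\log^5 n)$ time, exactly as claimed. The only probabilistic subtlety, which I expect to be the main (though mild) obstacle, is that the bounds for \nuct and for \fparent hold only with high probability; I would therefore finish by taking a union bound over the constantly many high-probability events arising in a single \nn call. Since each such event fails with probability at most $1/n^c$ and there are $O(1)$ of them, their conjunction fails with probability at most $O(1/n^c)=1/\poly(n)$, so all three stated bounds hold simultaneously \whp\ for any given call of \nn.
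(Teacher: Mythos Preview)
Your proposal is correct and follows essentially the same approach as the paper: observe that a single \nn\ call makes only $O(1)$ calls to \fparent, \nuct, \up, and heap operations, then invoke \lemmaref{le:complexity_pointers_tree} and standard heap bounds. Your write-up is in fact more detailed than the paper's (which is a terse three-sentence argument), in particular in spelling out the resource accounting for \fparent\ and the final union bound.
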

\LongVersion
\begin{proof}
{\color{black}
Each call of \nn\ is executed, w.h.p., by $O(\log n)$ number of calls (and a constant number of calls in expectation) to \fparent\ and  \nuct,
 as well as $O(\log n)$ number
 of calls to \hinsert\ and \hextmin, and accesses to ``arrays''.}
  The claim then follows from standard deterministic
 heap implementations (which uses $O(1)$ space per  stored item  and $O(\log n)$ time per query) and from
 Lemma~\ref{le:complexity_pointers_tree}.
\end{proof}
\LongVersionEnd

We  now state the properties of our on-the-fly graph generator for BA-graphs.

\begin{definition}
\label{def:answers}
For a number of queries $T>0$ and a sequence of  \nn\ queries $Q=(q(1),\ldots,q(T))$,
let $A(Q)$ be the sequence of answers returned by an algorithm $A$ on $Q$.
If $A$ is randomized then $A(Q)$ is a probability distribution on sequences of answers.
\end{definition}

Let $\optBA_n$ be the (randomized) algorithm that first runs the Markov process to generate a graph $G$ on $n$ nodes
according to the BA  
model,
 stores $G$, and  then answers
 queries by accessing  the stored $G$.
 Let  $\flyBA_n$ be the algorithm \flyBA\
 run with graph-size $n$.
\LongVersion
 From  the definition of the algorithm we have the following.
\LongVersionEnd

\begin{theorem}
\label{th:equal_distribution}
For any sequence of queries $Q$, $\optBA_n(Q) = \flyBA_n(Q)$.
\end{theorem}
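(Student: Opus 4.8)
The plan is to exhibit a single underlying random object --- a fully realized pointers tree together with its flags --- so that both $\optBA_n$ and $\flyBA_n$ answer each query by applying the \emph{same} deterministic read-off rule to it, and then to show that the two algorithms induce the \emph{same} law on this object. First I would fix the reference distribution. By the claim that $BA_n$ and $Z_n$ are identically distributed, and by the representation of $Z_n$ through the pointers tree $UT$ in which the flag of each pointer is uniform in $\{\direct,\recursive\}$, the BA-distribution on $n$ nodes coincides with the law $\mathcal{D}$ in which every node $j>1$ draws $u(j)$ uniformly from $[1,j-1]$ and $type(j)$ uniformly from $\{\direct,\recursive\}$, all independently. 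Under this identification, the answers of $\optBA_n$ to the queries on a node $j$ are its neighbors in increasing index order, and this relation is a fixed function of the coordinates $\{(u(\cdot),type(\cdot))\}$ of the sampled tree (the parent of $j$, plus those nodes whose $u(\cdot)$-path to $j$ ends in a \direct\ edge preceded only by \recursive\ edges).

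The crux is to show that $\flyBA_n$ reveals the coordinates of a tree drawn from exactly $\mathcal{D}$, despite exposing them lazily and in a query-dependent order. I would proceed by a principle-of-deferred-decisions argument, inducting over the atomic random events performed by the generator (the parent/flag draws inside \up\ and the candidate draws inside \nuc\ via \toss). The induction hypothesis is that after any prefix of the execution the committed facts --- the exposed values $u(\cdot)$, $type(\cdot)$, and the ``non-edge'' constraints recorded through $\front(\cdot)$ and hence $\potenf(\cdot)$ --- are consistent with some completion to a full tree, and that the conditional law of the unexposed coordinates given these facts is exactly $\mathcal{D}$ restricted to consistent completions. The \up\ step preserves this trivially, drawing $u(j)$ uniformly over $[1,j-1]$ and an independent uniform flag. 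For \nuc\ I would use \lemmaref{le:equal_prob}: the efficient candidate-selection procedure returns each next candidate $x$ with probability $\hat{D}(x)=D(x)$, where $D$ is the law of the naive sequential process that tosses, for each still-eligible $x$, an independent coin of bias $1/\varphi(x)$. Since $\varphi(x)=|\potenf(x)|$ is exactly the number of nodes still able to parent $x$ under the committed constraints, this bias is precisely the $\mathcal{D}$-conditional probability that $u(x)=j$, so exposing the next child agrees coordinate-by-coordinate with $\mathcal{D}$. I would also note that \toss\ is an \emph{exact} sampler of the discrete distribution $\{P_h\}$ (the $\alpha=n^c$ discretization together with its correction step removes any rounding bias), so these equalities are exact rather than approximate --- which is what the theorem demands.

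Finally I would connect the exposed coordinates to the answers. \lemmaref{le:heap_correct} shows that the $\ell$-th call $\nn(j)$ returns the $\ell$-th neighbor of $j$ in the BA-graph induced by the (partially) exposed tree, and the correctness of \fparent\ gives the parent as the first neighbor; these constitute exactly the read-off rule used by $\optBA_n$. Because the answer to any query depends only on coordinates exposed by the time it is returned, and because by the deferred-decisions argument the joint law of the exposed coordinates equals the corresponding marginal of $\mathcal{D}$, the induced distribution on any finite prefix of answers is identical for the two algorithms; chaining these per-step equalities along $Q$ yields $\optBA_n(Q)=\flyBA_n(Q)$. I expect the main obstacle to be the second paragraph: establishing order-independence of the lazy exposure, i.e.\ that whatever the interleaving of \up\ and \nuct\ calls, the probability the generator uses for each newly exposed pointer matches the $\mathcal{D}$-conditional given all previously committed edges \emph{and} non-edges. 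The bookkeeping of non-edges through $\front(\cdot)$ and $\potenf(\cdot)$, whose monotonicity is controlled by \lemmaref{lem:pf}, is precisely what makes this match go through.
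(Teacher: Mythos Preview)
Your proposal is correct and is, in fact, considerably more detailed than what the paper itself offers: the paper gives no proof at all for this theorem, stating only that it follows ``from the definition of the algorithm'' after having established \lemmaref{le:equal_prob} and \lemmaref{le:heap_correct}. Your deferred-decisions argument, coupling both generators through a single realization of the pointers tree with law $\mathcal{D}$ and invoking exactly those two lemmas together with \lemmaref{lem:pf}, is precisely the rigorous justification that the paper leaves implicit, so the approaches are the same in spirit.

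One remark on the point you flag yourself: your claim that \toss\ is an \emph{exact} sampler of $\{P_h\}$ is indeed load-bearing for the exact equality $\optBA_n(Q)=\flyBA_n(Q)$, and the paper is silent on why the two-phase discretization in \toss\ (the $\alpha=n^c$ grid plus the rescaled retry) incurs no bias. The boundaries $P'_y=1-\tfrac{\xi-1}{\xi+y-1}$ are rational, so an exact integer-grid sampler exists in principle, but the specific rescaling $\alpha\gets\alpha\cdot\prod_y(P'_{y+1}-P'_y)$ written in the pseudocode is not obviously a common denominator. If you write this out in full you should either verify that this product does yield exactness (using the telescoping form of the $P'_y$), or replace it by an explicit common-denominator argument; either way this is a local technical check and does not affect the structure of your proof.
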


\LongVersion
We  now conclude by stating the complexities of our on-the-fly BA generator.
\LongVersionEnd
\begin{theorem}
For any $T>0$ and any sequence of queries
 $Q=(q(1),\ldots,q(T))$, when using $\flyBA_n$ it holds
  w.h.p. that, for all $1\leq t \leq T$:
\ShortVersion
\begin{inparaenum}[(1)]
\ShortVersionEnd
\LongVersion
\begin{enumerate}
\LongVersionEnd
\item The increase in the used space, while  processing query $t$,  is $O(\log^3 n)$. {\color{black} Therefore, the total space that is used to store the state of the generator after $T$ queries is $O(T \cdot \log^3 n)$.}
\item The number of  random bits used  while processing  query $t$ is $O(\log^5 n)$.
\item The time complexity for processing query $t$ is $O(\log^6 n)$.
\LongVersion
\end{enumerate}
\LongVersionEnd
\ShortVersion
\end{inparaenum}
\ShortVersionEnd
\end{theorem}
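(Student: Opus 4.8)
The plan is to reduce the per-query bounds to the single-call bounds already established in \theoremref{th:comp_one_call_nn}, and then lift that per-call guarantee from one call to all $T$ queries by a union bound. The key observation is that $\flyBA_n$ answers the $t$-th query by a single invocation of $\nn(q(t))$, so the space increase, random bits, and running time charged to query $t$ are exactly those of one call of $\nn$. Since \theoremref{th:comp_one_call_nn} already states that, for any fixed call of $\nn$, each of the three bounds ($O(\log^2 n)$ space increase, $O(\log^4 n)$ random bits, $O(\log^5 n)$ time) holds \whp, the only thing left is to upgrade ``for any fixed call'' to ``for all $1 \le t \le T$ simultaneously''.

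First I would isolate the global structural event $\mathcal{E}$ that the underlying pointers tree has height $O(\log n)$ and maximum in-degree $O(\log n)$ (\claimref{cl:low_degree_and_height}). Because the generator reproduces the random recursive tree distribution, in which each $u(j)$ is uniform over $[1,j-1]$, the hypothesis of that claim holds and $\mathcal{E}$ holds \whp; being a property of the tree alone, it needs to be charged only once to the union bound. All structural consequences used inside the single-call analysis — in particular that each $\nuc$ call unfolds along a recursion path of length $O(\log n)$ (\lemmaref{le:recursive_calls}) — become deterministic once we condition on $\mathcal{E}$.

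Conditioned on $\mathcal{E}$, the remaining randomness is organized into per-sub-call events of two kinds: the bound of $O(\log n)$ on the number of loop iterations inside a single $\nuc$ invocation (\claimref{cl:number_of_iterations}), and the success of the {\tt if}-branch of a single $\toss$ call; each fails with probability $n^{-\Omega(1)}$. I would then count the total number of such sub-events over the whole sequence: each user query spawns $O(1)$ calls to $\nuct$, each $\nuct$ spawns $O(\log n)$ calls to $\nuc$ (the types being uniform), each $\nuc$ call unfolds into a recursion path of length $O(\log n)$ under $\mathcal{E}$, and each $\nuc$ invocation performs $O(\log n)$ iterations with $O(1)$ $\toss$ calls apiece. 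Hence every query contributes only $\polylog(n)$ sub-events, for a total of $O(T \cdot \polylog(n))$. Taking the constant $c$ in $\toss$ (the parameter $\alpha = n^c$) and in the Chernoff bound of \claimref{cl:number_of_iterations} large enough, the sum of all these $n^{-\Omega(1)}$ failure probabilities, together with the one-time failure probability of $\mathcal{E}$, is $1/\poly(n)$ as long as $T = \poly(n)$.

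The hard part will be making this union bound rigorous despite the adaptive, nested structure of the computation: later queries operate on the state accumulated by earlier ones, so the per-sub-call events are neither independent nor a priori fixed in number, and even the count ``$O(\log n)$ calls to $\nuc$ per $\nuct$'' is itself a \whp\ statement that must be folded into the same union bound rather than assumed outright. I would therefore verify carefully that conditioning on $\mathcal{E}$ and on the good events of earlier sub-calls does not inflate the per-sub-call failure probabilities beyond $n^{-\Omega(1)}$ (so that \claimref{cl:number_of_iterations} and the $\toss$ analysis still apply), and that the polynomial bound on the number of sub-events — which forces $T = \poly(n)$ — is absorbed by enlarging the fixed constants hidden in the single-call guarantees of \theoremref{th:comp_one_call_nn}.
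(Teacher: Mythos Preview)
Your overall strategy---reduce to the single-call guarantee of \theoremref{th:comp_one_call_nn} and then union-bound over the calls---is the same as the paper's. But there is a genuine gap: the theorem is stated for \emph{any} $T>0$, and you yourself note that your union bound over $O(T\cdot\polylog(n))$ sub-events ``forces $T=\poly(n)$''. As written, your argument does not prove the theorem for super-polynomial $T$.

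The paper closes this gap with a one-line observation you are missing. Call a query $\nn(v)$ at time $t$ \emph{trivial} if some earlier $\nn(v)$ already returned $n+1$. Trivial queries just re-extract $n+1$ from the heap and re-insert it: deterministic $O(\log n)$ time, no randomness, no space increase. Since the BA-graph has $n$ vertices, the number of \emph{non-trivial} queries is at most $n^2$ regardless of $T$ (each vertex can yield at most $n$ distinct answers before it starts repeating $n+1$). One then union-bounds the failure probability of \theoremref{th:comp_one_call_nn} over fewer than $n^2$ non-trivial calls, which is $\poly(n)\cdot n^{-\Omega(1)}=n^{-\Omega(1)}$ after adjusting the constants, and the trivial calls contribute nothing to the bad event.

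Two smaller remarks. First, you need not re-open the box and union-bound over internal sub-events (loop iterations, $\toss$ branches, recursion depth); the paper simply takes \theoremref{th:comp_one_call_nn} as a black box giving a $1-n^{-\Omega(1)}$ guarantee per call, and union-bounds over the $<n^2$ non-trivial calls. Second, your concern about adaptivity is legitimate but is handled the same way: the per-call guarantee of \theoremref{th:comp_one_call_nn} is stated for an arbitrary current state of the generator, so it applies to call $t$ conditioned on whatever happened in calls $1,\ldots,t-1$; the union bound then goes through without any independence assumption.
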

\begin{proof}
A query $\nn(v)$ at time $t$ is a {\em trivial} if  at some  $t'<t$  a query $\nn(v)$  returns
$n+1$.
Observe
 that trivial queries take $O(\log n)$ {\em deterministic} time, do not use  randomness, and do not increase the used space.
Since  there are less than $n^2$ {\em non-trivial}  queries, the theorem  follows from Theorem~\ref{th:comp_one_call_nn}
and a union bound.
\end{proof}

\LongVersion
We note that  the various assertions in this paper of the form of ``with high probability ... is $O(\log^cn)$'' can also
be stated in the form of   ``with probability $1-\frac{1}{n^d}$ ... is $f(d)\cdot \log^c n$''.
Therefore, we can combine these
 various assertions, and together with the fact that the number of non-trivial queries
 is $\poly(n)$, we  get the final result stated above.
 \LongVersionEnd

\LongVersion

\LongVersionEnd

{\color{black}
\section{A Direction for Extending to General Out-Degrees}
Given a random process that generates a Preferential-Attachment graph with out-degree $1$, Bollob{\'{a}}s and Riordan~\cite{BR04} consider the following generalization for defining a process that generates a Preferential-Attachment graph with general out-degree.
A $BA^{m}_{n}$-graph, where $BA^{m}_{n}$ denotes an $n$ node Preferential-Attachment graph with out-degree $m$, is constructed from a $BA_{nm}$-graph by identifying each of the $n$ nodes of the $BA^{m}_{n}$-graph with a block of $m$ nodes in the $BA_{nm}$-graph~\footnote{\color{black}Bollob{\'{a}}s and Riordan define the process $BA_{n}$ slightly different from the model we are using in this paper. According to their definition the head of the edge $e_j$ is the node $v_i$, for $1\leq i \leq j-1$,
with probability  $\frac{\deg(v_i,BA_{j-1})}{2j-1}$ and is $v_j$ with probability $\frac{1}{2j-1}$ (so there is some, vanishing, probability of forming self-loops).}

Hence, a $BA^{m}_{n}$-graph can be generated on-the-fly by using an on-the-fly generator for a $BA_{n m}$-graph, in a black-box manner, by increasing the complexity by a factor of $m$ (up to poly-logarithmic factors).
We leave working out the details for this generalization to higher out-degrees, as well as for other variants of generalization, for further research.

}
\LongVersion
\section{Conclusions}
We introduce an approach to probing and accessing a huge random graph, sampled from a given distribution,
in a way that does not require
to first sample the whole huge graph and then access it. The latter approach may require prohibitive amounts of time,
space and randomness which could be avoided if only relatively small parts of the random graph are accessed. The
feasibility of such savings may be especially challenging when the distribution at hand is usually defined by an evolving
sequential process (over, e.g., nodes) such as in the Barab{\'{a}}si-Albert Preferential Attachment
model or the random recursive tree model.

We show how to achieve such savings for the Barab{\'{a}}si-Albert graphs of out-degree $1$, as well as for the
evolving tree model, and give on-the-fly generation algorithms for both models such that with
probability  $1-1/\poly(n)$, each and every query is answered in $\polylog(n)$ time, and the
increase in space and the number of random bits consumed by any single query are both
$\polylog(n)$, where $n$ denotes the number of vertices in the graph.
\LongVersionEnd

\LongVersion
\paragraph*{Acknowledgments.}
\LongVersionEnd
\ShortVersion
\paragraph{Acknowledgments.}
\ShortVersionEnd
We thank Yishay Mansour for raising the question of whether one can locally generate
preferential attachment graphs, and  Dimitri Achlioptas and Matya Katz for useful
discussions. We further thank an anonymous ICALP reviewer for a comment that helped us
simplify one of the data structure implementations.

\bibliographystyle{plain}
\bibliography{pa}

\LongVersion
\appendix

\section{Implementations of Data Structures}

\subsection{Data Structure for $\varphi(\cdot)$}
\label{se:stabbing_data_structure}

We use two balanced binary search trees (or order statistic trees).
 One, called {\tt left}, stores all vertices $i$ such
that $\front(i)\neq \nil$. The other, called {\tt right}, stores (the multi-set) $\{\front(i) ~| ~\front(i)\neq \nil\}$.
To determine $\varphi(a)$ we find, using tree {\tt right}, how many nodes $i$ have $\front(i) > a-1$ (and $\front(i)\neq \nil$).
Let this number be $R$. Using tree {\tt left} we find how many nodes  $i <a$ have  $\front(i)\neq \nil$.
Let this number be $L$. Then $\varphi(a)= R-L$.

By standard implementations of balanced search trees
 the space complexity is $O(k)$ and all operations are done in time $O(\log k)=O(\log n)$) Here $k$ denotes the number of nodes $i$ such that  $\front(i)\neq \nil$.

\subsection{Data Structure to find the node of rank $h$ in $[a,n+1]\setminus K$}
\label{se:rank_data_structure}

We start with a number of definitions useful for specifying the data structure and its operations.

For a node $j\in \{1, \ldots , n+1\}$ and a subset of nodes $Q \subseteq \{1, \ldots , n+1\}$,  define $Q(j)$ as follows:
$$
{Q}(j) \triangleq
\left\{
\begin{array}{ll}
j  &\text{if} ~ j \in Q  ~\text{or} ~ j=1 \\
\max_{j' \in Q} \{j' | j'<j\} & \text{otherwise\,} 
\end{array}
\right.~.
$$

Note that for technical reasons for $j=1$ we define $Q(j)=1$ whether or not $j \in Q$.

For a node $j\in \{1, \ldots , n+1\}$ and a subset of nodes $Q \subseteq \{1, \ldots , n+1\}$, define $\trank_{Q}(j)$ as follows:
$$
\trank_{Q}(j) \triangleq  | \{i ~| ~i < Q(j) ; i \in Q\} |~.
$$

We note that using these definitions we have that, for any $j \in \{1, \ldots , n+1\}$, the number
 of items $ i <j $ in $\bar{Q}$, where $\bar{Q} = \{1, \ldots , n+1\} \setminus Q$, is  $(j-1) - \trank_Q(j)$.

 The $\tinsert_Q$, $\tdelete_Q$ and $\trank_Q$ operations are implemented as in a standard order-statistics tree
 based on a balanced binary  search tree.
 The operation $\trank_{\bar{Q}}$ is implemented using the $\trank_Q$  and then performing the calculation above.
 To implement  $\tselect_{\bar{Q}}(s)$ we proceed as follows.
 We traverse the search tree with the value $s$, and in each node of the tree that contains the vertex $j$
  we compare $s$ with $(j-1)-\trank_{Q}(j)$. Thus, we can find the maximum $j\in Q$
  such that $\trank_{\bar{Q}}(j) \leq s$. Denote this node $j'$. We then return the
 node $j' +\left[ s +1 - ((j'-1)-\trank_{Q}(j'))\right]$.

 The time complexities of $\tinsert_Q$, $\tdelete_Q$ and $\trank_Q$ and $\trank_{\bar{Q}}$ are therefore $O(\log n)$ based on
  standard order statistics trees. The time complexity of $\tselect_{\bar{Q}}$ is $O(\log^2 n)$: for each node along the search path of
  length $O(\log n)$ we need to use the query $\trank_Q$.
\LongVersionEnd

\end{document}